\begin{document}



\newcommand{\defeq}{\mathrel{\overset{\text{\tiny def}}{=}}} 

\newcommand{\pl}[1]{\textsc{#1}} 

\newcommand{\lambdaTau}{\lambda_{[\tau]}} 

\newcommand{\ctxunlock}{\text{\small\faUnlock}} 

\newcommand{\rulename}[1]{{\sc{#1}}} 

\newcommand{\dotminus}{\mathbin{\scriptstyle\dot{\smash{\textstyle-}}}} 

\newcommand{\bnfis}{\mathrel{\;{:}{:}{=}\ }}
\newcommand{\bnfor}{\mathrel{\;\big|\ \ }}



\newcommand{\boxmodstd}{\raisebox{-0.75mm}{\scalebox{2}{$\square$}}} 


\newcommand{\op}{\mathsf{op}} 
\newcommand{\opto}{\leadsto} 
\newcommand{\tysigop}[3]{#1 \opto #2 \mathbin{!} #3} 


\newcommand{\T}{T} 
\newcommand{\TS}{S} 
\newcommand{\etaT}{\eta^{\T}} 
\newcommand{\muT}{\mu^{\T}} 
\newcommand{\strT}{\mathsf{str}^{\T}} 

\newcommand{\opT}{\op^{\T}} 
\newcommand{\delayT}{\mathsf{delay}^{\T}} 


\newcommand{\etaPRA}{\eta^{\text{PRA}}}
\newcommand{\epsPRA}{\varepsilon^{\text{PRA}}}


\newcommand{\expto}{\Rightarrow} 
\newcommand{\lam}[1]{\lambda #1 \,.\,} 

\newcommand{\One}{\mathbb{1}} 
\newcommand{\Zero}{\mathbb{0}} 

\newcommand{\Cat}{\mathbb{C}} 
\newcommand{\Set}{\mathsf{Set}} 
\newcommand{\Pshf}{\Set^{(\mathbb{N},\le)}} 

\newcommand{\id}{\mathsf{id}} 

\newcommand{\pastmod}[1]{\langle #1 \rangle} 
\newcommand{\futuremod}[1]{[ #1 ]} 

\newcommand{\epsP}{\varepsilon^{\langle\rangle}} 
\newcommand{\etaF}{\eta^{[]}} 

\newcommand{\fst}{\mathsf{fst}} 
\newcommand{\snd}{\mathsf{snd}} 

\newcommand{\m}{\mathsf{m}} 

\newcommand{\e}{\mathsf{e}} 

\newcommand{\h}{\chi} 

\newcommand{\enr}{\mathsf{enr}} 

\newcommand{\etaA}{\eta^{\dashv}} 
\newcommand{\epsA}{\varepsilon^{\dashv}} 


\newcommand{\ctxmod}[1]{\langle #1 \rangle} 

\newcommand{\ctxtime}[1]{\mathsf{time}\; #1} 
\newcommand{\ctxminus}{\mathbin{-}} 


\newcommand{\duration}{\mathbin{!}} 

\newcommand{\tybase}{\mathsf{b}} 
\newcommand{\tyunit}{\mathsf{unit}} 
\newcommand{\tyempty}{\mathsf{empty}} 
\newcommand{\typrod}[2]{#1 \times #2} 
\newcommand{\tysum}[2]{#1 + #2} 
\newcommand{\tyfun}[2]{#1 \to #2} 
\newcommand{\tybox}[2]{[ #1 ]\, #2} 


\newcommand{\tycomp}[2]{#1 \duration #2} 


\newcommand{\Ops}{\mathcal{O}} 

\newcommand{\tm}[1]{\mathsf{#1}} 
\newcommand{\tmkw}[1]{\tm{\color{keywordColor}#1}} 

\newcommand{\tmconst}[1]{\tm{#1}} 
\newcommand{\tmunit}{()} 
\newcommand{\tmpair}[2]{( #1 , #2 )} 
\newcommand{\tminl}[2][]{\tmkw{inl}_{#1}\,#2} 
\newcommand{\tminr}[2][]{\tmkw{inr}_{#1}\,#2} 
\newcommand{\tmfun}[2]{{\mathop{\tmkw{fun}}}\; (#1) \mapsto #2} 
\newcommand{\tmapp}[2]{#1\,#2} 
\newcommand{\tmbox}[2][]{\tmkw{box}_{#1}\,#2} 

\newcommand{\tmreturn}[2][]{\tmkw{return}_{#1}\, #2} 
\newcommand{\tmlet}[4][]{\tmkw{let}_{#1}\; #2 = #3 \;\tmkw{in}\; #4} 
\newcommand{\tmhandle}[4]{\tmkw{handle}\; #1 \;\tmkw{with}\; #2 \;\tmkw{to}\; #3\;\tmkw{in}\; #4} 
\newcommand{\tmunbox}[4][]{\tmkw{unbox}_{#1}\; #2 \;\tmkw{as}\; #3 \;\tmkw{in}\; #4} 

\newcommand{\tmop}[3]{\tm{#1}\;#2\; #3} 
\newcommand{\tmgeneff}[2]{\tm{#1}\; #2} 
\newcommand{\tmcont}[2]{(#1 \,.\, #2)} 
\newcommand{\tmopclause}[3]{(#1 \,.\, #2 \,.\, #3)} 

\newcommand{\tmdelay}[2]{\tmkw{delay}\;#1\; #2} 

\newcommand{\tmmatch}[3][]{\tmkw{match}\;#2\;\tmkw{with}\;\{#3\}_{#1}} 


\newcommand{\types}{\vdash} 
\newcommand{\of}{\mathinner{:}} 

\newcommand{\vj}[3]{#1 \vdash #2 : #3} 
\newcommand{\cj}[4]{#1 \vdash #2 : #3 \mathrel{!} #4} 

\newcommand{\subty}[2]{#1 \sqsubseteq #2} 
\newcommand{\sub}{\sqsubseteq} 

\newcommand{\mkrule}[3]{\frac{#1}{#3}{\textsc{#2}}} 

\newcommand{\ruleinfer}[3]{\inferrule*[Lab={\color{rulenameColor}#1}]{#2}{#3}}


\newcommand{\sem}[1]{[\![#1]\!]} 
\newcommand{\gsem}[1]{[\![#1]\!]^g} 

\newcommand{\cond}[3]{\mathsf{if}~#1~\mathsf{then}~#2~\mathsf{else}~#3} 

\title{When Programs Have to Watch Paint Dry}

\author{Danel Ahman}
\institute{
  Faculty of Mathematics and Physics, University of Ljubljana, Ljubljana, Slovenia \\
  \email{danel.ahman@fmf.uni-lj.si}}

\maketitle

\begin{abstract}
We explore type systems and programming abstractions for the safe usage of resources. 
In particular, we investigate how to use types to modularly specify and check \emph{when} 
programs are allowed to use their resources, e.g., when programming a robot arm on a 
production line, it is crucial that painted parts are given enough 
time to dry before assembly. We capture such \emph{temporal resources} using a 
time-graded variant of Fitch-style modal type systems, develop a corresponding modally typed, effectful 
core calculus, and equip it with a graded-monadic denotational  
semantics illustrated by a concrete presheaf model. Our calculus also includes  
graded algebraic effects and effect handlers. They are given a novel temporally aware treatment in
which operations' specifications include their execution times and their continuations 
know that an operation's worth of additional time has passed before they start executing, 
making it possible to safely access further temporal resources in them.

\keywords{
  Temporal resources  \and 
  Modal types \and 
  Graded monads \and 
  Algebraic effects \and 
  Effect handlers.}
\end{abstract}



\definecolor{codegreen}{rgb}{0,0.6,0}
\definecolor{codegray}{rgb}{0.5,0.5,0.5}
\definecolor{codepurple}{rgb}{0.58,0,0.82}
\definecolor{backcolour}{rgb}{0.95,0.95,0.92}

\colorlet{keywordColor}{NavyBlue} 
\colorlet{rulenameColor}{Gray} 



\crefformat{section}{\S#2#1#3}
\Crefformat{section}{\S#2#1#3}

\crefformat{subsection}{\S#2#1#3}
\Crefformat{subsection}{\S#2#1#3}

\crefformat{subsubsection}{\S#2#1#3}
\Crefformat{subsubsection}{\S#2#1#3}

\crefformat{theorem}{Thm.~#2#1#3}
\Crefformat{theorem}{Thm.~#2#1#3}

\crefformat{proposition}{Prop.~#2#1#3}
\Crefformat{proposition}{Prop.~#2#1#3}

\crefformat{figure}{Fig.~#2#1#3}
\Crefformat{figure}{Fig.~#2#1#3}


\section{Introduction}
\label{sect:introduction}

The correct usage of resources is at the heart of many programs, especially if they 
control safety-critical machinery. 
Such resources can take many different forms: ensuring that file 
handles are not arbitrarily duplicated or discarded (as captured by linear  
and uniqueness types)~\cite{Benton:LinearLambda,Girard:LinearLogic,Koopman:FPinClean}, 
or guaranteeing that communication happens according to 
protocols (as specified by session types)~\cite{Honda:SessionTypes,Wadler:GV}, or controlling 
how data is laid out in memory (as in Hoare and separation 
logics)~\cite{Ahman:RecallingAWitness,Jung:Iris,Nanevski:HTT,Reynolds:SeparationLogic}, or 
assuring that resources are correctly finalised~\cite{Ahman:Runners,Leijen:Finalisation}.

In contrast to the above approaches that predominantly focus on \emph{how} resources 
are used, we study how to modularly specify and  
verify \emph{when} programs can use their resources---we call such resources 
\emph{temporal}. For instance, consider the following code snippet controlling 
a robot arm on a (car) production line:
\[
\small
\begin{array}{l}
\tmlet{(\text{body'},\text{left-door'},\text{right-door'})}{\tmgeneff{paint}{(\text{body},\text{left-door},\text{right-door})}}{
  \\
  \tmgeneff{assemble}{(\text{body'},\text{left-door'},\text{right-door'})}}
\end{array}
\]

Here, the correct execution of the program (and thus operation of the robot arm 
it is controlling) relies on the car parts given enough time to dry between painting and 
assembly. Therefore, in its current form, the above code is correct only if a 
compiler (or a scheduler) inserts enough of a time delay at compile time 
(resp.~dynamically blocks program's execution for enough time) between the calls to  
$\tm{paint}$ and $\tm{assemble}$. However, in either case, one still faces the 
question of how to reason about the correctness of the compiled code (resp.~dynamic checks).

In this paper, we focus on developing a type system based means for reasoning about the 
temporal correctness of the code that the above-mentioned compiler might produce, 
or that a programmer might write directly when full control of the 
code is important. 
In particular, we had \emph{three desiderata} we set out to fulfil:
\begin{enumerate}[itemsep=5pt]
\item We did not want the delay between $\tm{paint}$ and $\tm{assemble}$ 
  to be limited to just \emph{blocking execution}, with the robot sitting idly 
  while watching paint dry.
  Instead, we wanted a flexible formalism that would allow
  the robot to spend that time \emph{doing other useful work}, 
  while ensuring that enough time passes.
\item We wanted the \emph{passage of time of program execution to be modelled within 
the type system}, rather than being left to some unspecified meta-level run-time.
\item We wanted the resulting language to give programmers the freedom to \emph{redefine the 
  behaviour of operations} such as $\tm{paint}$ and $\tm{assemble}$, say, via \emph{effect 
  handling}~\cite{Plotkin:HandlingEffects}, 
  while respecting the operations' temporal specifications.
\end{enumerate}

\subsubsection*{Paper Structure}

We achieve these goals by designing a \emph{mathematically
natural core programming language} for safe and correct programming with temporal 
resources: on the one hand, based on a time-graded, temporal variant of \emph{Fitch-style 
modal type systems}~\cite{Clouston:FitchStyle,Gratzer:PRA}, and on the other hand, 
on \emph{graded monads}~\cite{Katsumata:GradedMonads,Mellies:GradedMonads,Smirnov:GradedMonads}. 

We review modal types and discuss how we use them to capture 
temporal resources in \cref{sect:overview}. In \cref{sect:core-calculus}, we 
present $\lambdaTau$---our modally typed, effectful, equationally presented core calculus 
for safe programming with temporal resources. We justify the design of $\lambdaTau$ by giving 
it a mathematically natural sound denotational semantics in \cref{sect:semantics}, based on graded 
monads and adjunctions between strong monoidal functors, including a concrete 
presheaf example. In \cref{sect:delay-equations}, we briefly discuss a 
specialisation of $\lambdaTau$ with equations for time delays. We review 
related work and remark on future work in \cref{sect:related-future-work}, 
and conclude in \cref{sect:conclusion}.
This paper is also accompanied by an online appendix 
(\href{https://arxiv.org/abs/2210.07738}{https://arxiv.org/abs/2210.07738})
that presents further details of renamings and denotational semantics that we omit  
in \cref{sect:core-calculus} and \cref{sect:semantics}.

For supplementary rigour, we have formalised the main  
results of \cref{sect:core-calculus} and \cref{sect:semantics} also in Agda~\cite{agda:aw}, available at
\href{https://github.com/danelahman/temporal-resources/releases/tag/fossacs2023}{https://github.com/danelahman/temporal-resources/releases/tag/fossacs2023}.
Regrettably, it currently lacks (i) proofs of some auxiliary lemmas 
noted in \cref{prop:substitution-semantic} due to a bug in Agda where \textsc{with}-abstractions produce ill-typed  
terms,\footnote{Eta-contraction is not type-preserving: \href{https://github.com/agda/agda/issues/2732}{https://github.com/agda/agda/issues/2732}}
and (ii) two laws of the presheaf model because 
unfolding of definitions produces unmanageably large terms for Agda.


\section{Modal Types for Temporal Resources}
\label{sect:overview}

We begin with an overview of (Fitch-style) modal type systems and how a 
time-graded variant of them naturally captures temporal aspects of resources.

\subsection{(Fitch-Style) Modal Types}
\label{sect:overview-modal-types}

A \emph{modal type system} extends the types of an underlying type system with
new \emph{modal type formers},\footnote{For brevity, we use the term \emph{modal type system} to interchangeably refer to
both modal type systems and natural deduction systems of (intuitionistic) modal logics.} e.g., $\boxmodstd X$, which states
that the type is to be considered and reasoned about in a \emph{different mode} compared to $X$, 
which can take many forms. For instance, in Kripke's possible worlds semantics, $\boxmodstd X$ 
means that values of type $X$ are \emph{available in all 
future worlds}~\cite{Kripke:PossibleWorlds}; 
in run-time code generation,  
the type $\boxmodstd X$ captures \emph{generators of $X$-typed code}~\cite{Wickline:Staging}; and
in asynchronous and distributed programming, the type $\boxmodstd X$ 
specifies  
\emph{mobile $X$-typed values}~\cite{Ahman:HOAeff,Murphy.PhDThesis,Radescek:MScThesis}.

Many different approaches to presenting modal type systems have been developed, with one of the main 
culprits being the difficulty of getting the \emph{introduction rule} for $\boxmodstd X$ correct. 
Namely, bearing in mind Kripke's possible worlds semantics, the introduction rule for $\boxmodstd X$ 
must allow one to use only those hypotheses that also hold in all future worlds, while at the same time 
ensuring that the system still enjoys expected structural properties.
Solutions to this problem have involved proving $\boxmodstd X$ in a context containing only 
$\boxmodstd$-types~\cite{Prawitz:NaturalDeduction} (with a failure of structural properties in the 
naive approaches), or building a form of explicit substitutions into the introduction rule for $\boxmodstd X$ 
to give the rule premise access to only $\boxmodstd$-types~\cite{Bierman:ModalLogic}, 
or incorporating the Kripke semantics in the type system by explicitly
indexing types with worlds~\cite{Simpson:PhDThesis}---see~\cite{Kavvos:ModalSurvey} for an in-depth survey.

In this paper, we build on \emph{Fitch-style modal type systems}~\cite{Borghuis:PhDThesis,Clouston:FitchStyle,Gratzer:PRA,Martini:CompIntModalProofs}, 
where the typing rules for $\boxmodstd X$ are given with respect to another 
modality, \ctxunlock, that acts on contexts, resulting in a particularly pleasant type-theoretic
presentation.

As an illustrative example, in a Fitch-style modal type system corresponding to the 
modal logic S4 (whose Kripke models require the order on worlds to be reflexive and transitive, 
thus also corresponding to natural properties of time), the typing rules for variables and the 
$\boxmodstd X$ type have the following form:\footnote{Depending on which exact modal
 logic one is trying to capture, the form of contexts used in the introduction/elimination rules
 can differ, see~\cite{Clouston:FitchStyle} for a detailed overview.}
\[
  \ruleinfer{Var}{
    \ctxunlock \not\in \Gamma'
  }{
    \Gamma, x \of X, \Gamma' \types x : X
  }
  \qquad
  \ruleinfer{Shut}{
    \Gamma, \ctxunlock \types t : X
  }{
    \Gamma \types \tmkw{shut}\;{t} : \boxmodstd{X}
  }
  \qquad
  \ruleinfer{Open}{
    \Gamma \types t : \boxmodstd{X}
  }{
    \Gamma,\Gamma' \types \tmkw{open}\;{t} : X
  }
\]

Intuitively, the \emph{context modality} $\ctxunlock$ creates a barrier in the premise of \rulename{Shut} 
so that only $\boxmodstd$-typed variables can be used from $\Gamma$ in 
$t$, achieving the above-mentioned 
correctness goal for the introduction rule of $\boxmodstd X$. Alternatively, in the context of Kripke's possible 
worlds semantics, one can also read the occurrences of the $\ctxunlock$ modality as advancing 
the underlying world---in \rulename{Shut}, $t$ in the premise is typed in some future world compared 
to where $\tmkw{shut}\;{t}$ is typed at.
This intuition will be useful to how we use a similar modality  
to capture the passage of time in $\lambdaTau$. The context weakening $\Gamma,\Gamma'$ 
in \rulename{Open} ensures the admissibility of structural rules, and in the possible worlds reading, 
it intuitively expresses that if $\boxmodstd X$ is available 
in some world, then $X$ will be available in all possible future worlds.

\subsection{Modal Types for Temporal Resources}
\label{sect:overview-temporal-resources}

Next, we give a high-level overview of how we use a time-graded variant of Fitch-style 
modal type systems to capture temporal properties of resources  in $\lambdaTau$.
For this, we use the production line code snippet from \cref{sect:introduction} as 
a working example.

\subsubsection{A Naive Approach}

Before turning to modal types, a naive solution to achieve the desired time delay would be for 
$\tm{paint}$ to return the required drying time and for the program to delay
execution for that time duration, e.g., as expressed in
\[
\small
\begin{array}{l}
\tmlet{(\tau_{\text{dry}}, \text{body'},\text{left-door'},\text{right-door'})}{\tmgeneff{paint}{(\text{body},\text{left-door},\text{right-door})}}{
  \\
  \tm{delay}\;{\tau_{\text{dry}}};
  \\
  \tmgeneff{assemble}{(\text{body'},\text{left-door'},\text{right-door'})}}
\end{array}
\]
It is not difficult to see that we could generalise this solution to allow performing other 
useful activities while waiting for $\tau_{\text{dry}}$ time to pass. So are we done and can we 
conclude the paper here? Well, no, because this solution puts all the burden for writing 
correct code on the shoulders of the programmer, with successful typechecking giving no
additional guarantees that $\tau_{\text{dry}}$ indeed will have passed.

\subsubsection{A Temporal Resource Type}

Instead, inspired by Fitch-style modal type systems and Kripke's possible worlds semantics of 
the $\boxmodstd$-modality, we propose a \emph{temporal resource type}, 
written $\tybox{\tau}{X}$, to specify that a value of type $X$ will become available for use in \emph{at most} 
$\tau$ time units, or to put it differently, the boxed value of type $X$ can be explicitly unboxed only 
when \emph{at least} $\tau$ time units have passed.
Concretely, $\tybox{\tau}{X}$ is presented by the following two typing rules:
\[
  \ruleinfer{Box}{
    \Gamma, \ctxmod \tau \types V : X
  }{
    \Gamma \types \tmbox[\tau]{V} : \tybox{\tau}{X}
  }
  \hspace{1ex}
  \ruleinfer{Unbox}{
    \tau \le \ctxtime \Gamma \\
    \Gamma \ctxminus \tau \types V : \tybox{\tau}{X} \\
    \Gamma, x \of X \types N : \tycomp{Y}{\tau'}
  }{
    \Gamma \types \tmunbox[\tau]{V}{x}{N} : \tycomp{Y}{\tau'}
  }
\]
Above, $\tau$s are natural numbers that count discrete time moments, and 
$\tycomp{Y}{\tau'}$ is a type of 
computations returning $Y$-typed values and executing in $\tau'$ time units.

Analogously to the context modality $\ctxunlock$ of Fitch-style modal type systems, we  
introduce a similar \emph{modality on contexts}, written $\ctxmod \tau$, to express that when 
typechecking a term of the form $\Gamma, \ctxmod \tau \types V : X$, we can safely
assume that \emph{at least} $\tau$ time will have passed before $V$ is accessed or executed, 
as in the premise of the \rulename{Box} rule. Accordingly, in \rulename{Unbox}, 
we require that at least $\tau$ time units have passed since the resource $V$
of type $\tybox{\tau}{X}$ was created or brought into scope, by typing $V$  
in the ``earlier'' context $\Gamma \ctxminus \tau$ 
(we define this operation in \cref{sect:type-system}).

Encapsulating temporal resources as a type gives us
flexible first-class access to them, and allows to pack them 
in data structures and pass them to functions.

\subsubsection{Modelling Passage of Time}

As we see in the \rulename{Unbox} rule, we can unbox a temporal resource only
when enough time has passed since its creation. This begs the question: How can the 
passage of time be modelled within the type system? For this, we propose a new notion of
\emph{temporally aware graded algebraic effects}, where each operation $\op$ is 
specified not only by its parameter and result types, but also by its 
prescribed execution time, and with $\op$'s continuation knowing that $\op$'s
worth of additional time has passed before it begins executing. We refer the reader 
to~\cite{Bauer:WhatIsAlgebraic,Hyland:CombiningEffects,Katsumata:GradedMonads,Plotkin:NotionsOfComputation}
for background on ordinary (graded) algebraic effects.

For instance, the $\tm{paint}$ operation, taking $\tau_{\text{paint}}$ time, is typed in $\lambdaTau$ as\footnote{We
  present $\lambdaTau$ formally using algebraic operations with explicit continuations, 
  while in code snippets we use so-called \emph{generic effects}~\cite{Plotkin:AlgOperations} without explicit continuations.}
\[
  \ruleinfer{}{
    \Gamma \types V : \typrod{\mathsf{Body}}{\typrod{\mathsf{Door}}{\mathsf{Door}}} \\
    \Gamma, \ctxmod {\tau_{\text{paint}}}, 
      x \of \typrod{\tybox{\tau_{\text{dry}}}{\mathsf{Body}}}{\typrod{\tybox{\tau_{\text{dry}}}{\mathsf{Door}}}{\tybox{\tau_{\text{dry}}}{\mathsf{Door}}}} 
        \types M : \tycomp{X}{\tau}
  }{
    \Gamma \types \tmop{paint}{V}{\tmcont x M} : \tycomp{X}{\tau_{\text{paint}} + \tau}
  }
\]
Here, $\ctxmod {\tau_{\text{paint}}}$ expresses that from the perspective of any $\tmkw{unbox}$es
in $M$, an \emph{additional $\tau_{\text{paint}}$ time} will have passed compared to the 
beginning of the execution of $\tmop{paint}{V}{\tmcont x M}$, which is
typed in the ``earlier'' context $\Gamma$. Also,  
observe that $\tm{paint}$'s result $x$ is available \emph{after} $\tau_{\text{paint}}$ time has passed 
(i.e., after $\tm{paint}$ finishes), and its type 
has the car part types wrapped as temporal resources, 
ensuring that any further operations (e.g., $\tm{assemble}$) can access them
only after \emph{at least} $\tau_{\text{dry}}$ time has passed \emph{after} $\tm{paint}$ finishes. 
The $\tm{delay}\; \tau$ operation is typed 
analogously.

Finally, similarly to algebraic operations, we also use the context modality $\ctxmod {\tau}$ 
to model the passage of time in sequential composition, as specified in
\[
  \ruleinfer{}{
    \Gamma \types M : \tycomp{X}{\tau} \\
    \Gamma, \ctxmod \tau, x \of X \types N : \tycomp{Y}{\tau'}
  }{
    \Gamma \types \tmlet{x}{M}{N} : \tycomp{Y}{\tau + \tau'}
  }
\]
The type $\tycomp{X}{\tau}$ (for specifying 
the execution time of computations) is standard from 
graded monads style effect systems~\cite{Katsumata:GradedMonads}. 
The novelty of our work is to 
use this effect information to inform continuations that they can 
safely assume that the given amount of additional time has passed 
before they start executing.

\subsubsection{Putting It All Together}

We conclude this overview by revisiting the production line code snippet and note that
in the $\lambdaTau$-calculus we can write it as
\[
\small
\begin{array}{l}
\tmlet{(\text{body'},\text{left-door'},\text{right-door'})}{\tmgeneff{paint}{(\text{body},\text{left-door},\text{right-door})}}{
  \\
  \tm{delay}\; \tau_{\text{dry}};
  \\
  \tmunbox{\text{body'}}{\text{body''}}{}
  \\
  \tmunbox{\text{left-door'}}{\text{left-door''}}{}
  \\
  \tmunbox{\text{right-door'}}{\text{right-door''}}{}
  \\
  \tmgeneff{assemble}{(\text{body''},\text{left-door''},\text{right-door''})}}
\end{array}
\]

Observe that apart from the $\tmkw{unbox}$ operations, the code looks identical 
to the naive, unsafe solution discussed earlier. However, crucially, now any code that wants to
use the outputs of $\tm{paint}$ will typecheck only if these resources are accessed after
at least $\tau_{\text{dry}}$ time units have passed after $\tm{paint}$ finishes. In the code snippet, 
this is achieved by blocking execution
with $\tm{delay}\; \tau_{\text{dry}}$ for $\tau_{\text{dry}}$ time units, but this could have been 
equally well achieved by executing other useful operations $\op_1 ;\; \ldots ;\; \op_n$, 
as long as they collectively take at least $\tau_{\text{dry}}$ time.


\section{A Calculus for Programming with Temporal Resources}
\label{sect:core-calculus}

We now recast the ideas explained above 
as a formal, modally typed, effectful core calculus, called $\lambdaTau$. 
We base it on the fine-grain call-by-value $\lambda$-calculus~\cite{Levy:FGCBV}.
 
\subsection{Types}

The types of $\lambdaTau$ are given in \cref{fig:types}. \emph{Ground types}
include base types $\tybase$, and are closed under finite products and the modal \emph{temporal 
resource type} $\tybox{\tau}{A}$. The latter denotes that an $A$-typed value 
will become available in \emph{at most} $\tau$ time units, where $\tau \in \mathbb{N}$ counts 
discrete time moments.\footnote{For concreteness, we work with 
  $(\mathbb{N}, 0, +, \dotminus, \le)$ for time grades, but we do not 
  foresee problems generalising these to come from other analogous 
  algebraic structures.} 
The ground types can also come with \emph{constants} $\tmconst{f}$ with 
associated \emph{constant signatures} $\tmconst{f} : (A_1,\ldots,A_n) \to B$.

To model operations such as $\tm{paint}$ and $\tm{assemble}$ discussed 
in \cref{sect:overview-temporal-resources}, we assume a set of \emph{operations symbols} $\Ops$, 
with each $\op \in \Ops$ assigned an \emph{operation signature} 
$\op : \tysigop{A_\op}{B_\op}{\tau_\op}$, which specifies that $\op$ accepts inputs of
type $A_\op$, returns values of type $B_\op$, and its execution takes $\tau_\op$ time units.
Observe that by typing operations with ground types, as opposed to simply with base types, 
we can specify operations such as 
$\tm{paint} : \tysigop{\mathsf{Part}}{(\tybox{\tau_{\text{dry}}}{\mathsf{Part}})}{\tau_{\text{paint}}}$,
returning values that can be accessed only after a certain amount of time, here, after $\tau_{\text{dry}}$.

\emph{Value types} extend ground types with \emph{function type} $\tyfun{X}{\tycomp{Y}{\tau}}$ 
that specifies functions taking $X$-typed arguments to computations 
that return $Y$-typed values and take $\tau$ time to execute, as expressed 
by the \emph{computation type} $\tycomp{Y}{\tau}$.

\begin{figure}[tb]
  \parbox{\textwidth}{
  \centering
  \begin{align*}
  \text{Time grade:}
  \phantom{\bnfis}& \tau \in \mathbb{N}
  \\[0.5ex]
  \text{Ground type $A$, $B$, $C$}
  \bnfis& \tybase \bnfor \!\!\tyunit \bnfor \!\!\typrod{A}{B} \bnfor \!\!\tybox{\tau}{A} 
  \\[0.5ex]
  \text{Value type $X$, $Y$, $Z$}
  \bnfis& A \bnfor \!\!\typrod{X}{Y} \bnfor \!\!\tyfun{X}{\tycomp{Y}{\tau}} \bnfor \!\!\tybox{\tau}{X}
  \\[0.5ex]
  \text{Computation type:}
  \phantom{\bnfis}& \tycomp{X}{\tau}
  \end{align*}
  } 
  \caption{Types of $\lambdaTau$.}
  \label{fig:types}
\end{figure}

\subsection{Terms}

The syntax of terms is given in \cref{fig:terms}, separated into values and computations.

\begin{figure}[h]
  \parbox{\textwidth}{
  \centering
  \abovedisplayskip=0pt
  \begin{align*}
  \intertext{\textbf{Values}}
  V, W
  \bnfis& x                                       & &\text{variable} \\
  \bnfor& \tmconst{f}(V_1,\ldots,V_n)               & &\text{constant} \\
  \bnfor& \tmunit \bnfor  \!\!\tmpair{V}{W}                         & &\text{unit and pairing} \\
  \bnfor& \tmfun{x : X}{M}                        & &\text{function} \\
  \bnfor& \tmbox[\tau]{V}                        & &\text{boxing up a temporal resource}
  \\[-0.5ex]
  \intertext{\textbf{Computations}}
  M, N
  \bnfis& \tmreturn{V}                            & &\text{returning a value} \\
  \bnfor& \tmlet{x}{M}{N}                           & &\text{sequential composition} \\
  \bnfor& V\,W                                    & &\text{function application} \\
  \bnfor& \tmmatch{V}{\tmpair{x}{y} \mapsto N}    & &\text{pattern-matching} \\
  \bnfor& \tmop{op}{V}{\tmcont x M}       & &\text{operation call} \\
  \bnfor& \tmdelay{\tau}{M}       & &\text{time delay} \\
  \bnfor& \tmhandle{M}{H}{x}{N}       & &\text{effect handling} \\
  \bnfor& \tmunbox[\tau]{V}{x}{N}       & &\text{unboxing a temporal resource}
  \\[-0.5ex]
  \intertext{\textbf{Effect handlers}}
  H
  \bnfis& {\tmopclause{x}{k}{M_{\op}}}_{\op \in \Ops}                           & &\text{operation clauses}
  \end{align*}
  } 
  \caption{Values, computations, and effect handlers of $\lambdaTau$.}
  \label{fig:terms}
\end{figure}

\emph{Values} include variables, constants, finite tuples, functions, and the \emph{boxing up of 
temporal resources}, $\tmbox[\tau]{V}$, 
which allows us to consider an arbitrary value $V$ as a temporal resource 
as long as it is safe to access $V$ after $\tau$ time units.

\emph{Computations} include returning values, sequential composition, function 
application, pattern-matching\footnote{The form $\tmlet{(x,y,z)}{M}{N}$ in \cref{sect:introduction},\ref{sect:overview} is the natural combination of $\tmkw{let}$ and $\tmkw{match}$.}, algebraic operation calls, effect handling, and the 
\emph{unboxing of temporal resources}, where given a temporal resource $V$ of type $\tybox{\tau}{X}$, 
the computation $\tmunbox[\tau]{V}{x}{N}$ is used to access the underlying value of 
type $X$ if at least $\tau$ time units have passed since the creation of the resource $V$.

In addition to user-specifiable operation calls (via operation signatures and effect handling), 
we include a separate $\tmdelay{\tau}{M}$ operation that blocks the execution 
of its continuation for the given amount of time. 
For simplicity, we require effect handlers 
to have \emph{operation clauses} $M_\op$ for all $\op \in \Ops$, but
we do not allow
$\tmkw{delay}$s to be handled in light of the equations we want of them in \cref{sect:delay-equations}, 
where all consecutive $\tmkw{delay}$s are collapsed and all zero-$\tmkw{delay}$s are removed.

\subsection{Type System}
\label{sect:type-system}

We now equip $\lambdaTau$ with a modal type-and-effect system. On 
the one hand, for modelling temporal resources, we build on Fitch-style 
modal type systems~\cite{Clouston:FitchStyle}. On the other hand, for
modelling effectful computations and their specifications, we build on 
type-and-effect systems for calculi based on graded 
monads~\cite{Katsumata:GradedMonads}.

The \emph{typing judgements} are 
written as $\Gamma \types V : X$ and 
$\Gamma \types M : \tycomp{X}{\tau}$, where $\tau$ specifies $M$'s
execution time and $\Gamma$ is a \emph{temporal typing context}, given by
\[
\Gamma \bnfis \cdot \bnfor \Gamma, x \of X \bnfor \Gamma, \ctxmod \tau
\]
Here, $\ctxmod \tau$ is a \emph{temporal context modality}, akin to 
$\ctxunlock$ in Fitch-style systems. We use it to
express that when typechecking a term of the form $\Gamma, \ctxmod \tau \types V : X$, 
we can safely assume that \emph{at least} $\tau$ time will have passed before 
the resource $V$ is accessed or executed. The \emph{rules} defining these 
judgements are given in \cref{fig:typing-rules}.

\begin{figure}[tp]
  \centering
  
  
  \textbf{Values}
  \begin{mathpar}
  \ruleinfer{Var}{
    \phantom{...}
  }{
    \Gamma, x \of X, \Gamma' \types x : X
  }

  \ruleinfer{Const}{
    (\Gamma \types V_i : A_i)_{1 \leq i \leq n}
  }{
    \Gamma \types \tmconst{f}(V_1, \ldots , V_n) : B
  }

  \ruleinfer{Pair}{
    \Gamma \types V : X \\
    \Gamma \types W : Y
  }{
    \Gamma \types \tmpair{V}{W} : \typrod{X}{Y}
  }
  \end{mathpar}
  \begin{mathpar}
  \ruleinfer{Unit}{
  }{
    \Gamma \types \tmunit : \tyunit
  }

  \ruleinfer{Fun}{
    \Gamma, x \of X \types M : \tycomp{Y}{\tau}
  }{
    \Gamma \types \tmfun{x : X}{M} : \tyfun{X}{\tycomp{Y}{\tau}}
  }

  \ruleinfer{Box}{
    \Gamma, \ctxmod \tau \types V : X
  }{
    \Gamma \types \tmbox[\tau]{V} : \tybox{\tau}{X}
  }
  \end{mathpar}
  
  \vspace{1.25ex}
  
    
  \textbf{Computations}
  \begin{mathpar}
  \ruleinfer{Return}{
    \Gamma \types V : X
  }{
    \Gamma \types \tmreturn{V} : \tycomp{X}{0}
  }
  
  \ruleinfer{Let}{
    \Gamma \types M : \tycomp{X}{\tau} \\
    \Gamma, \ctxmod \tau, x \of X \types N : \tycomp{Y}{\tau'}
  }{
    \Gamma \types \tmlet{x}{M}{N} : \tycomp{Y}{\tau + \tau'}
  }
  \end{mathpar}
  \begin{mathpar}
  \ruleinfer{Apply}{
    \Gamma \types V : \tyfun{X}{\tycomp{Y}{\tau}} \\
    \Gamma \types W : X
  }{
    \Gamma \types \tmapp{V}{W} : \tycomp{Y}{\tau}
  }
  
  \ruleinfer{Match}{
    \Gamma \types V : \typrod{X}{Y} \\
    \Gamma, x \of X, y \of Y \types N : \tycomp{Z}{\tau}
  }{
    \Gamma \types \tmmatch{V}{\tmpair{x}{y} \mapsto N} : \tycomp{Z}{\tau}
  }
  \end{mathpar}
  \begin{mathpar}
  \ruleinfer{Op}{
    \Gamma \types V : A_\op \\
    \Gamma, \ctxmod {\tau_\op}, x \of B_\op \types M : \tycomp{X}{\tau}
  }{
    \Gamma \types \tmop{op}{V}{\tmcont x M} : \tycomp{X}{\tau_\op + \tau}
  }
  
  \ruleinfer{Delay}{
    \Gamma, \ctxmod {\tau} \types M : \tycomp{X}{\tau'}
  }{
    \Gamma \types \tmdelay{\tau}{M} : \tycomp{X}{\tau + \tau'}
  }
  \end{mathpar}
  \vspace{-1ex}
  \begin{mathpar}
  \ruleinfer{Handle}{
    \Gamma \types M : \tycomp{X}{\tau} \\
    \Gamma, \ctxmod {\tau}, x \of X \types N : \tycomp{Y}{\tau'} \\
    H = {\tmopclause{x}{k}{M_{\op}}}_{\op \in \Ops} \\
    \big( \forall \tau'' \;.\; \Gamma , x \of A_\op , k \of \tybox{\tau_\op}{(\tyfun{B_\op}{\tycomp{Y}{\tau''}})} \types M_\op : \tycomp{Y}{\tau_\op + \tau''} \big)_{\op \in \Ops}\\
  }{
    \Gamma \types \tmhandle{M}{H}{x}{N}: \tycomp{Y}{\tau + \tau'}
  }
  \end{mathpar}
  \begin{mathpar}
  \ruleinfer{Unbox}{
    \tau \le \ctxtime \Gamma \\
    \Gamma \ctxminus \tau \types V : \tybox{\tau}{X} \\
    \Gamma, x \of X \types N : \tycomp{Y}{\tau'}
  }{
    \Gamma \types \tmunbox[\tau]{V}{x}{N} : \tycomp{Y}{\tau'}
  }
  
  \end{mathpar}
  
  \caption{Typing rules of $\lambdaTau$.}
  \label{fig:typing-rules}
\end{figure}

In contrast to Fitch-style modal type systems discussed in 
\cref{sect:overview-modal-types}, \rulename{Var} does not restrict the $\Gamma'$  
right of $x$ to not include any context modalities. This is so because 
in the possible worlds reading of $\lambdaTau$ (see \cref{sect:semantics})
we treat all types as being monotone for time---this is not usually 
the case for formulae in modal logics such as S4, but  in $\lambdaTau$ 
this models that once any value is available it will remain so.

As in systems based on graded monads,  
\rulename{Return} specifies that returning a value takes zero time, 
and \rulename{Let} 
that the execution time of sequentially composed computations 
is the sum of the individual ones. Novel to $\lambdaTau$, \rulename{Let}, \rulename{Op}, 
\rulename{Delay}, and \rulename{Handle} state that the continuations 
can safely
assume that relevant amount of additional time has
passed before they start executing, as discussed in \cref{sect:overview-temporal-resources}.

When typing the operation clauses $M_\op$ in \rulename{Handle}, 
we universally quantify (at the meta-level) over the execution time $\tau''$ 
of the continuation $k$ of $M_\op$. We do so as 
the operation clauses $M_\op$ must be able to execute at any point
when effect handling recursively traverses $M$. Further, observe that 
$k$ is wrapped inside a resource type. This ensures that 
$k$ is invoked only after $\tau_\op$ amount of time has been spent in 
$M_\op$, thus guaranteeing that the temporal discipline is respected. 
Note that this enforces a \emph{linear} discipline for our
effect handlers: for $\tau_\op > 0$, $k$ must be executed 
exactly once for $M_\op$'s execution time to match $\tau_\op + \tau''$.

Finally, \rulename{Box} specifies that in order to box up a value $V$ of type $X$ 
as a temporal resource of type $\tybox{\tau}X$, we must be able to type 
$V$ when assuming that $\tau$ additional time units will have passed
before $V$ is accessed. At the same time, \rulename{Unbox} specifies that we can 
unbox a temporal resource $V$ of type $\tybox{\tau}X$ only if at least $\tau$
time units have passed since its creation: the time captured by $\Gamma$ must
be at least $\tau$, and we must be able to type $V$ in a $\tau$ time
units ``earlier'' context $\Gamma \ctxminus \tau$. 
The \emph{time captured by a context}, $\ctxtime \Gamma$, is calculated recursively as 
\[
\begin{array}{c}
\ctxtime \cdot \defeq 0
\qquad
\ctxtime {(\Gamma, x \of X)} \defeq \ctxtime \Gamma
\qquad
\ctxtime {(\Gamma, \ctxmod {\tau})} \defeq \ctxtime \Gamma + \tau
\end{array}
\]
and the \emph{``time travelling'' operation} $\Gamma \ctxminus \tau$ as
(where $\tau_+ \equiv 1 + \tau''$ for some $\tau''$)
\[
\begin{array}{c}
  \Gamma \ctxminus 0 \defeq \Gamma
  \qquad
  \cdot \ctxminus\, \tau_+ \defeq \cdot
  \qquad
  (\Gamma, x \of X) \ctxminus \tau_+ \defeq \Gamma \ctxminus \tau_+
\\[1ex]
  (\Gamma, \ctxmod {\tau'}) \ctxminus \tau_+ \defeq 
    \cond
      {\tau_+ \le \tau'}
      {\Gamma, \ctxmod {\tau' \dotminus \tau_+}}
      {\Gamma \ctxminus (\tau_+ \dotminus \tau')}
\end{array}
\]
taking $\Gamma$ to an 
``earlier'' state by removing $\tau$ worth of modalities and
variables.

\subsection{Admissibility of Renamings and Substitutions}

We now show that expected \emph{structural}
and \emph{substitution rules}~\cite{Barendregt:LCWithTypes} 
are admissible.
  
\begin{theorem}
\label{thm:renaming}
The typing relations $\Gamma \types V : X$ and 
$\Gamma \types M : \tycomp{X}{\tau}$ are closed under standard structural 
rules of weakening, exchange of consecutive variables, and contraction 
(omitted here). Furthermore, both typing relations are also closed under
rules making $\ctxmod -$ into a strong monoidal functor (with a co-strength)~\cite{MacLane:CatWM}: 
\[
\begin{prooftree}[rule style=double]
  \Hypo{{\Gamma, \ctxmod {0}} \types J}
  \Infer1{\Gamma \types J}
\end{prooftree}
\qquad
\begin{prooftree}[rule style=double]
  \Hypo{\Gamma, \ctxmod {\tau_1 + \tau_2} \types J}
  \Infer1{\Gamma, \ctxmod {\tau_1}, \ctxmod {\tau_2} \types J}
\end{prooftree}
\qquad
\begin{prooftree}
  \Hypo{{\Gamma, \ctxmod {\tau}} \types J \quad \tau \le \tau'}
  \Infer1{\Gamma, \ctxmod {\tau'} \types J}
\end{prooftree}
\qquad
\begin{prooftree}
  \Hypo{{\Gamma, \ctxmod {\tau}, x \of X} \types J}
  \Infer1{\Gamma, x \of X, \ctxmod {\tau} \types J}
\end{prooftree}
\]
where $\Gamma \types J$ ranges over both typing relations, where 
the first two rules hold in both directions, and the last rule
expresses that if we can type $J$ using a variable ``now'', we can also 
type $J$ if that variable was brought into scope ``earlier''.
\end{theorem}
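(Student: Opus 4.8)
The plan is to prove all the admissible rules simultaneously by induction on typing derivations, setting things up carefully so that the induction goes through. The key difficulty, as always with Fitch-style systems, is that the context modality $\ctxmod{-}$ interacts with the typing rules \rulename{Box}, \rulename{Unbox}, \rulename{Let}, \rulename{Op}, \rulename{Delay}, and \rulename{Handle} in a way that requires us to know how each admissible rule behaves not just at the top of a context but when further context (possibly containing more modalities and more variables) has been appended. So the first step is to generalise each statement to allow an arbitrary suffix $\Gamma'$: e.g.\ weakening becomes ``if $\Gamma, \Gamma' \types J$ then $\Gamma, x \of X, \Gamma' \types J$'', and similarly for exchange, contraction, and the four $\ctxmod{-}$-functoriality/co-strength rules. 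In fact it is cleanest to subsume all of them under a single notion of \emph{renaming} $\rho : \Gamma \to \Delta$ between contexts --- the online appendix is said to develop exactly this --- where a renaming is a map on variables respecting types and respecting the temporal structure (it can drop variables, permute them, split/merge/weaken modalities, and move a variable past a following modality), together with one lemma: if $\rho : \Gamma \to \Delta$ and $\Gamma \types J$ then $\Delta \types J$. Each of the displayed rules is then an instance.

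The second step is to check that the class of renamings is closed under the two operations the induction will need: composition, and the ``action under a modality'', i.e.\ from $\rho : \Gamma \to \Delta$ we can form $\rho, \ctxmod{\tau} : \Gamma, \ctxmod{\tau} \to \Delta, \ctxmod{\tau}$ and, more delicately, a renaming $\Gamma \ctxminus \tau \to \Delta \ctxminus \tau$ whenever $\tau \le \ctxtime\Gamma$ (needed for the \rulename{Unbox} case). This last point forces two auxiliary arithmetic lemmas about the interaction of $\ctxtime{-}$ and $\ctxminus$ with context extension and with each other --- monotonicity of $\ctxtime{-}$ along renamings, and the fact that $(\Gamma \ctxminus \tau)$ is well-behaved under the moves a renaming performs. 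These are exactly the kind of lemmas the paper flags in \cref{prop:substitution-semantic} as currently missing from the Agda development because of the \textsc{with}-abstraction bug, which is a good sign that they are routine-but-fiddly rather than deep.

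The third step is the induction proper over the derivation of $\Gamma \types J$. Most cases (\rulename{Var}, \rulename{Const}, \rulename{Pair}, \rulename{Unit}, \rulename{Fun}, \rulename{Return}, \rulename{Apply}, \rulename{Match}) are immediate: apply $\rho$ (extended past the bound variables) to the premises and reassemble. The \rulename{Var} case is where the design choice pays off --- because $\lambdaTau$'s \rulename{Var} places no restriction on the suffix, a renaming simply needs to produce the target variable, and the co-strength rule (moving a variable past a following $\ctxmod{\tau}$) is sound precisely because all types are time-monotone. For \rulename{Box}, \rulename{Let}, \rulename{Op}, \rulename{Delay}, we extend $\rho$ by the relevant $\ctxmod{\tau_\op}$ (or $\ctxmod{\tau}$) and then by the bound variable, using closure of renamings under the ``$\ctxmod{\tau}$-action''. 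For \rulename{Handle} we additionally reuse the meta-level $\forall \tau''$ quantifier verbatim. The one genuinely interesting case is \rulename{Unbox}: here the premise $\Gamma \ctxminus \tau \types V : \tybox{\tau}{X}$ lives in the time-travelled context, so we must turn $\rho : \Gamma \to \Delta$ into a renaming $\Gamma \ctxminus \tau \to \Delta \ctxminus \tau$ (this is the step relying on the auxiliary lemmas from the second step), check that $\tau \le \ctxtime\Delta$ follows from $\tau \le \ctxtime\Gamma$ by monotonicity of $\ctxtime{-}$, handle the $N$ premise by extending $\rho$ with $x \of X$, and reassemble.

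I expect the main obstacle to be precisely the \rulename{Unbox} case together with the supporting arithmetic: getting the definition of renaming right so that ``$\ctxminus \tau$'' is functorial on it, and verifying the case analysis in the definition of $\Gamma \ctxminus \tau$ (the $\tau_+ \le \tau'$ split on a trailing modality) interacts correctly with every kind of renaming move. Everything else is a direct structural induction, and the four displayed $\ctxmod{-}$ rules are then obtained by exhibiting the corresponding renamings: the identity-on-variables maps $\Gamma, \ctxmod{0} \leftrightarrows \Gamma$ and $\Gamma, \ctxmod{\tau_1+\tau_2} \leftrightarrows \Gamma, \ctxmod{\tau_1}, \ctxmod{\tau_2}$ (both directions, using $\ctxtime{-}$ to see the two contexts capture the same time), the up-grading map $\Gamma, \ctxmod{\tau} \to \Gamma, \ctxmod{\tau'}$ for $\tau \le \tau'$, and the co-strength map $\Gamma, \ctxmod{\tau}, x \of X \to \Gamma, x \of X, \ctxmod{\tau}$.
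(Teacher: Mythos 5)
Your proposal matches the paper's proof essentially exactly: both reduce all the admissible rules to a single renaming relation $\rho : \Gamma \leadsto \Gamma'$ (given as a closure of generating moves corresponding to the structural rules), prove by induction on typing derivations that $\Gamma \types J$ and $\rho : \Gamma \leadsto \Gamma'$ imply $\Gamma' \types J[\rho]$, and isolate the same key ingredients for the \rulename{Unbox} case --- the induced renaming $\rho \ctxminus \tau : \Gamma \ctxminus \tau \leadsto \Gamma' \ctxminus \tau$ (justified by tracking how much modality-time sits to the right of each variable) and monotonicity of $\ctxtime{-}$ along renamings. The only quibble is your aside attributing the missing Agda lemmas of \cref{prop:substitution-semantic} to these syntactic facts, whereas those concern the semantic counterparts; this does not affect the argument.
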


\begin{proof}
First, we define a \emph{renaming relation} $\rho : \Gamma \leadsto \Gamma'$, 
and then prove by induction that if  $\Gamma \types J$ and $\rho : \Gamma \leadsto \Gamma'$ 
then $\Gamma' \types J[\rho]$, where $J[\rho]$ is $J$ renamed with $\rho$.
The $\leadsto$ relation is defined as the reflexive-transitive-congruent 
closure of rules corresponding to the desired structural rules, e.g., 
$\mathsf{var}^r_{x \of X \in \Gamma} : \Gamma, y \of X \leadsto \Gamma$ and
$\mu^r : \Gamma, \ctxmod {\tau_1 + \tau_2} \leadsto \Gamma, \ctxmod {\tau_1}, \ctxmod {\tau_2}$. 
The full list is given in the online appendix.

For the \rulename{Var} and \rulename{Unbox} cases of the proof, we show that if 
$\rho : \Gamma \leadsto \Gamma'$ and $x \in_\tau \Gamma$, then 
$\rho\, x \in_{\tau'} \Gamma'$ for some $\tau'$ with $\tau \le \tau'$, where 
$x \in_\tau \Gamma$ means that $x \in \Gamma$ and there is $\tau$ worth of 
modalities right of $x$ in $\Gamma$, and $\rho\, x$ is the variable that $\rho$ 
maps $x$ to. For \rulename{Unbox}, we further prove that
if $\rho : \Gamma \leadsto \Gamma'$, then for any $\tau$ we can build  
$\rho \ctxminus \tau : \Gamma \ctxminus \tau \leadsto \Gamma' \ctxminus \tau$, 
using the result about $\in_{\tau}$ to ensure that $\rho$ does not map
any $x \in \Gamma \ctxminus \tau$ outside of $\Gamma' \ctxminus \tau$.
We also establish that if $\Gamma \leadsto \Gamma'$, 
then $\ctxtime \Gamma \le \ctxtime {\Gamma'}$, allowing us to deduce $\tau \le \ctxtime{\Gamma'}$ 
from $\tau \le \ctxtime \Gamma$.
\end{proof}

The admissibility of the rules corresponding to $\mu^r$ 
(and its inverse) relies on us having defined context splitting in \rulename{Unbox}
using $\Gamma \ctxminus \tau$, as opposed to more rigidly as $\Gamma,\Gamma'$, 
as in~\cite{Clouston:FitchStyle}, as then it would be problematic if the split happens between
$\ctxmod{\tau_1},\ctxmod{\tau_2}$.
Inverses of the last two rules in \cref{thm:renaming}
are not valid---they would allow unboxing temporal resources without enough time having passed.

\begin{theorem}
The typing relations $\Gamma \types V : X$ and $\Gamma \types M : \tycomp{X}{\tau}$ are closed 
under substitution, i.e., if $\Gamma, x \of X, \Gamma' \types J$ and 
$\Gamma \types W : X$, then $\Gamma, \Gamma' \types J[W/x]$, where $J[W/x]$ is standard   
recursively defined capture-avoiding substitution~\cite{Barendregt:LCWithTypes}.
\end{theorem}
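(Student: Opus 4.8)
The plan is to proceed by induction on the derivation of $\Gamma, x \of X, \Gamma' \types J$, keeping $\Gamma$ and the value $W$ fixed throughout while tracking the position of $x \of X$ as a ``split point'' in the ambient context. In every subderivation the context remains of the shape $\Gamma, x \of X, \Gamma''$ for some $\Gamma''$ (the split point only ever moves to the right, past newly bound variables and $\ctxmod{-}$ modalities), with one exception: the premise typing $V$ in \rulename{Unbox}, where the context genuinely shrinks via the ``time-travelling'' operation $\ctxminus$ and the bound $x$ may be removed entirely. Before the induction I would establish three auxiliary facts: (i) \emph{iterated weakening}, i.e.\ $\Gamma \types J$ implies $\Gamma, \Gamma' \types J$ for any $\Gamma'$, obtained by composing the weakening and grade-monotonicity renamings of \cref{thm:renaming} (inserting $\ctxmod{0}$ and bumping it to $\ctxmod{\tau}$ as needed); (ii) a \emph{free-variables lemma}: if $\Delta \types J$ then every variable free in $J$ occurs in $\Delta$, hence $J[W/x] = J$ whenever $x$ is not in $\Delta$; and (iii) a \emph{bookkeeping lemma} for $\ctxminus$ and $\ctxtime{-}$ relating $(\Gamma, x \of X, \Gamma')\ctxminus\tau$ to $(\Gamma, \Gamma')\ctxminus\tau$, spelled out in the \rulename{Unbox} case below. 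Facts (ii) and (iii) are routine inductions on derivations resp.\ on $\Gamma'$.

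All cases other than \rulename{Unbox} I expect to be immediate. For \rulename{Var}: if the looked-up variable is $x$ itself, then $x[W/x] = W$ and we need $\Gamma, \Gamma' \types W : X$, which is iterated weakening of the hypothesis $\Gamma \types W : X$; otherwise the variable sits in $\Gamma$ or in $\Gamma'$, the substitution leaves it untouched, and \rulename{Var} reapplies in $\Gamma, \Gamma'$. For every rule whose premises only \emph{extend} the context --- \rulename{Let}, \rulename{Op}, \rulename{Delay}, \rulename{Handle}, \rulename{Box}, \rulename{Fun}, \rulename{Match}, and the admissible structural rules --- after $\alpha$-renaming bound variables away from $x$ and from the free variables of $W$, the split point simply slides past the freshly added variables and modalities, so the induction hypothesis applies to each premise with the same $\Gamma$ and $W$; reassembling the rule then yields exactly $J[W/x]$, with all time grades unchanged. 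The meta-level $\forall \tau''$ in \rulename{Handle} is dealt with pointwise: the induction hypothesis is applied to the $\tau''$-instance of each operation-clause premise, uniformly in $\tau''$.

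The genuine work is the \rulename{Unbox} case, typing $\tmunbox[\tau]{V}{y}{N}$ in $\Gamma, x \of X, \Gamma'$ from $\tau \le \ctxtime{(\Gamma, x \of X, \Gamma')}$, from $(\Gamma, x \of X, \Gamma')\ctxminus\tau \types V : \tybox{\tau}{Z}$, and from $\Gamma, x \of X, \Gamma', y \of Z \types N : \tycomp{Y}{\tau'}$. The side condition and the $N$ premise are painless: $\ctxtime{(\Gamma, \Gamma')} = \ctxtime{(\Gamma, x \of X, \Gamma')} \ge \tau$ since a variable contributes no captured time, and the induction hypothesis on $N$ (split point still right after $\Gamma$) gives $\Gamma, \Gamma', y \of Z \types N[W/x] : \tycomp{Y}{\tau'}$. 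For the $V$ premise one case-splits according to lemma (iii): either $\tau \le \ctxtime{\Gamma'}$, in which case $\ctxminus$ stays inside $\Gamma'$, so $(\Gamma, x \of X, \Gamma')\ctxminus\tau = \Gamma, x \of X, (\Gamma'\ctxminus\tau)$ and $(\Gamma, \Gamma')\ctxminus\tau = \Gamma, (\Gamma'\ctxminus\tau)$, and the induction hypothesis on $V$ (again with split point right after $\Gamma$, $W$ still typed in $\Gamma$) gives $(\Gamma, \Gamma')\ctxminus\tau \types V[W/x] : \tybox{\tau}{Z}$; or $\tau > \ctxtime{\Gamma'}$, in which case $\ctxminus$ consumes all of $\Gamma'$ and then $x \of X$ (which contributes no time), so $x$ does not occur in $(\Gamma, x \of X, \Gamma')\ctxminus\tau = (\Gamma, \Gamma')\ctxminus\tau$, whence by lemma (ii) $V[W/x] = V$ and the premise transfers verbatim. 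In both subcases \rulename{Unbox} reassembles over $\Gamma, \Gamma'$ to type $(\tmunbox[\tau]{V}{y}{N})[W/x]$, closing the induction.

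The main obstacle is exactly this \rulename{Unbox} bookkeeping: pinning down lemma (iii) and verifying that $\ctxminus$ never lands strictly inside $\Gamma$ without first deleting $x$ --- which holds because $x \of X$, lying immediately left of $\Gamma'$ and contributing zero captured time, is always consumed before any of $\Gamma$ can be. Everything else closely mirrors the renaming argument behind \cref{thm:renaming}.
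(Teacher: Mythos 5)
Your proposal is correct and follows essentially the same route as the paper: induction on the typing derivation, with \rulename{Unbox} as the only delicate case, resolved by analysing whether $\tau \le \ctxtime{\Gamma'}$ to determine whether $x$ survives in $(\Gamma, x \of X, \Gamma') \ctxminus \tau$ and hence whether $W$ is still substituted into $V$ or $V[W/x] = V$. The auxiliary facts you isolate (weakening for the \rulename{Var} case, the free-variable and $\ctxminus$ bookkeeping lemmas) are exactly the supporting machinery the paper's argument implicitly relies on.
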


\begin{proof}
The proof proceeds by induction on the derivation of $\Gamma, x \of X, \Gamma' \types J$.
The most involved case is \rulename{Unbox}, where we construct the derivation of
$\Gamma,\Gamma' \types \tmunbox[\tau]{V[W/x]}{y}{N[W/x]} : \tycomp{Y}{\tau'}$ by first analysing 
whether $\tau \le \ctxtime \Gamma'$, which tells us whether $x$ is in the context 
$(\Gamma, x \of X, \Gamma') \ctxminus \tau$ of $V$, based on which 
we learn whether $W$ continues to be substituted for $x$ in $V$ or whether $V[W/x] = V$.
\end{proof}

\subsection{Equational Theory}
\label{sect:equational-theory}

We conclude the definition of $\lambdaTau$ by equipping it with an \emph{equational theory} 
to reason about program equivalence, defined
using judgements $\Gamma \types V \equiv W : X$ 
and $\Gamma \types M \equiv N : \tycomp{X}{\tau}$, where we presuppose that the 
terms are well-typed for the given contexts and types. The rules defining these
relations are given in \cref{fig:equations}. We omit standard equivalence, congruence, 
and substitutivity rules~\cite{Barendregt:LCWithTypes}. 

\begin{figure}[h]
  \centering  
  \parbox{\textwidth}{
  \mathtoolsset{original-shortintertext=false,below-shortintertext-sep=0pt,above-shortintertext-sep=0pt}
  \begin{align*}
    \\
    \tmunit &\equiv V : \tyunit & (\eta)
    \\
    \tmfun{x : X}{\tmapp{V}{x}} &\equiv V : \tyfun{X}{\tycomp{Y}{\tau}} & (\eta)
    \\[1.25ex]
    \tmapp{(\tmfun{x \of X}{M})}{V} &\equiv M[V/x] & (\beta)
    \\[1.25ex]
    \tmmatch{\tmpair{V}{W}}{\tmpair{x}{y} \mapsto N} &\equiv N[V/x,W/y] & (\beta)
    \\
    \tmmatch{V}{\tmpair{x}{y} \mapsto N[\tmpair{x}{y}/z]} &\equiv N[V/z] & (\eta)
    \\[1.25ex]
    \tmlet{x}{(\tmreturn{V})}{N} &\equiv N[V/x] & (\beta)
    \\
    \tmlet{y}{(\tmlet{x}{M}{N})}{P} &\equiv \tmlet{x}{M}{(\tmlet{y}{N}{P})} & (\beta)
    \\
    \tmlet{x}{M}{\tmreturn{x}} &\equiv M & (\eta)
    \\[1.25ex]
    \tmlet{x}{(\tmop{op}{V}{\tmcont y M})}{N} &\equiv \tmop{op}{V}{\tmcont y {\tmlet x M N}} & (\beta)
    \\
    \tmlet{x}{(\tmdelay{\tau}{M})}{N} &\equiv \tmdelay{\tau}{(\tmlet x M N)} & (\beta)
    \\[1.25ex]
    \tmhandle{(\tmreturn{V})}{H}{x}{N} &\equiv N[V/x] & (\beta)
    \\[0.5ex]
    \tmhandle{(\tmop{op}{V}{\tmcont y M})}{H}{x}{N} &\equiv \\
    M_\op[V/x,\tmbox[\tau_\op]{(\tmfun{y : B_\op}{&\; \tmhandle{M}{H}{x}{N}})}/k] & (\beta)
    \\[0.5ex]
    \tmhandle{(\tmdelay{\tau}{M})}{H}{x}{N} &\equiv \tmdelay{\tau}{(\tmhandle{M}{H}{x}{N})} & (\beta)
    \\[1.25ex]
    \tmunbox[\tau]{(\tmbox[\tau]{V})}{x}{N} &\equiv N[V/x] & (\beta)
    \\
    \tmunbox[\tau]{V}{x}{N[\tmbox[\tau]{x}/y]} &\equiv N[V/y] & (\eta)
  \end{align*}
  } 

  \caption{Equational theory of $\lambdaTau$.}
  \label{fig:equations}
\end{figure}

The equational theory consists of standard $\beta/\eta$-equations for the unit, product, 
and function types. We also include monadic equations for $\tmkw{return}$ and 
$\tmkw{let}$~\cite{Moggi:ComputationalLambdaCalculus}. For $\op$ and $\tmkw{delay}$, we include  
algebraicity equations, allowing us to pull them out of $\tmkw{let}$~\cite{Bauer:WhatIsAlgebraic}. 
For $\tmkw{handle}$, we include equations expressing that
effect handling recursively traverses a term, replacing each $\op$-occurrence with the 
operation clause $M_\op$, leaving $\tmkw{delay}$s untouched, and finally 
executes the continuation $N$ when reaching return values~\cite{Plotkin:HandlingEffects}. Finally, we include 
$\beta$/$\eta$-equations for $\tmkw{box}$ and $\tmkw{unbox}$,   
expressing that $\tmkw{unbox}$ behaves as a pattern-matching elimination form for $\tmkw{box}$.


\section{Denotational Semantics}
\label{sect:semantics}

We justify the design of $\lambdaTau$ by giving it a mathematically natural semantics based on  
\emph{adjunctions between strong monoidal 
functors}~\cite{MacLane:CatWM} (modelling modalities) and a \emph{strong\footnote{To be more specific, we use a modal notion 
of \emph{$\futuremod -$-strength} that we define below.} graded monad}~\cite{Katsumata:GradedMonads} 
(modelling computations).
We assume general knowledge of category theory, only spelling out details
specific to $\lambdaTau$. To optimise for space, 
we discuss the abstract model structure simultaneously with a concrete example 
using presheaves~\cite{MacLane:Sheaves}, but note that the interpretation is defined, 
and its soundness proved, with respect to the abstract structure.

When referring to the \emph{abstract model structure}, we denote the underlying category with $\Cat$.
Meanwhile, the \emph{concrete presheaf example} is given in $\Pshf$, consisting of  
functors from $(\mathbb{N},\le)$ 
to the category $\Set$ of sets and functions. 

The model in $\Pshf$ is similar to Kripke's possible worlds semantics, 
except that in $\Pshf$ all objects are \emph{monotone} for $\le$, i.e., 
for any $A \in \Pshf$ we have functions $A(t_1 \le t_2) : A(t_1) \to A(t_2)$ 
respecting reflexivity and transitivity,
whereas Kripke models are commonly given by discretely indexed presheaves and only 
modalities change worlds.
For $\lambdaTau$, working in $\Pshf$ gives us that when a resource
becomes available, it will remain so without need for reboxing, leading 
to a more natural system for temporal resources and a simpler \rulename{Var} rule.

\subsection{Interpretation of Types}
\label{sect:interpretation-types}

\subsubsection{Value Types and Contexts}

To interpret value types, we require the category $\Cat$ to have \emph{finite products} 
$(\One, A \times B)$ and \emph{exponentials} $A \expto B$, 
so as to model the unit, product, and function types. In $\Pshf$, the former are 
given point-wise using the finite products in $\Set$, and the 
latter are given as $(A \expto B)(t) \defeq \Pshf(\mathsf{hom}\, t \times A, B)$, 
where $\mathsf{hom}\, t : (\mathbb{N},\le) \to \Set$ is the covariant 
\emph{hom-functor} for $(\mathbb{N},\le)$, given by $\mathsf{hom}\, t \defeq t \le (-)$~\cite{MacLane:Sheaves}.
When unfolding it further, the above means that  
$(A \expto B)(t)$ is the set of functions $(f_{t'} : A(t') \to B(t'))_{t' \in \{t' \in \mathbb{N} \vert t \le t'\}}$
that are natural in $t'$, capturing the intuition that in $\lambdaTau$ functions
can be applied in any future context. For base types, we require 
an object $\sem{\tybase}$ of $\Cat$ for each $\tybase$.

To interpret the temporal resource type, we require  
a \emph{strong monoidal functor} $\futuremod - : (\mathbb{N},\le) \to [ \Cat , \Cat ]$, where
$[ \Cat , \Cat ]$ is the category of endofunctors on $\Cat$.
This means that we have
functors $\futuremod \tau : \Cat \to \Cat$, for all $\tau \in \mathbb{N}$, together with 
morphisms $\futuremod {\tau_1 \le \tau_2}_A : \futuremod{\tau_1} A \to \futuremod{\tau_2}  A$, 
natural in $A$ and respecting $\le$.
Strong monoidality of $\futuremod -$ means that we have natural isomorphisms
${\varepsilon_A : \futuremod 0 A \overset{\cong}{\to} A}$ and $\delta_{A,\tau_1,\tau_2} : 
\futuremod {\tau_1 + \tau_2} A \overset{\cong}{\to} \futuremod {\tau_1} (\futuremod {\tau_2} A)$, satisfying 
time-graded variants of comonad laws~\cite{Beck:Triples}:
\[
\varepsilon \circ \delta_{A,0,\tau} \equiv \id
\qquad
\futuremod \tau (\varepsilon) \circ \delta_{A,\tau,0} \equiv \id
\qquad
\delta_{\futuremod {\tau_3} A,\tau_1,\tau_2} \circ \delta_{A,\tau_1 + \tau_2,\tau_3} 
  \equiv \futuremod {\tau_1} (\delta) \circ \delta
\]
We also require $(\delta_{A,\tau_1,\tau_2},\delta^{-1}_{A,\tau_1,\tau_2}) $ to be monotone in 
$\tau_1, \tau_2$, i.e., if $\tau_1 \le \tau'_1$ and $\tau_2 \le \tau'_2$, then 
$\futuremod {\tau'_1} (\futuremod {\tau_2 \le \tau'_2 }) \circ \futuremod {\tau_1 \le \tau'_1} \circ \delta
  \equiv \delta \circ \futuremod{\tau_1 + \tau_2 \le \tau'_1 + \tau'_2}_A$. 
We omit the indices of the components of natural transformations when convenient.

In $\Pshf$, we define $(\futuremod \tau A)(t) \defeq A(t + \tau)$, 
with $\futuremod \tau A$-values given by future $A$-values, 
and with $(\varepsilon_A,\varepsilon_A^{-1},\delta_A,\delta_A^{-1})$ 
given by identities on $A$-values, combined with the laws of $(0,+)$, e.g., as 
$(\varepsilon_A)_t\, \big(a \in (\futuremod{0}A)(t) \equiv A(t + 0)\big) \defeq a \in A(t)$.

Using the above, we interpret a value type $X$ as an object $\sem{X}$ of $\Cat$, as
\[
\begin{array}{c}
  \sem{A} \defeq \gsem{A}
  \qquad
  \sem{\tyunit} \defeq \One
  \qquad
  \sem{\typrod{X}{Y}} \defeq \sem{X} \times \sem{Y}
  \\[1ex]
  \sem{\tyfun{X}{\tycomp{Y}{\tau}}} \defeq \sem{X} \expto \T\, \tau \, \sem{Y}
  \qquad
  \sem{\tybox{\tau}{X}} \defeq \futuremod \tau \sem{X}
\end{array}
\]
where $\T$ is a graded monad for modelling computations---we 
return to it below. The interpretation of ground types $\gsem{A}$
is defined similarly, so we omit it here.

Next, we define the interpretation of contexts, for which we require another \emph{strong 
monoidal functor}, $\pastmod - : (\mathbb{N},\le)^{\text{op}} \to [ \Cat , \Cat ]$. 
Note that $\pastmod -$ is \emph{contravariant}---this enables us to model the  
structural rules that allow terms typed in an earlier context 
to be used in future ones (see \cref{thm:renaming}). We denote the strong 
monoidal structure of $\pastmod -$ with $\eta_A : A \overset{\cong}{\to} \pastmod 0 A$ and 
$\mu_{A, \tau_1, \tau_2} : \pastmod {\tau_1} (\pastmod {\tau_2} A) 
  \overset{\cong}{\to} \pastmod {\tau_1 + \tau_2} A$, 
required to satisfy time-graded variants of monad laws~\cite{MacLane:CatWM}, given by
\[
\mu_{\!A,0,\tau} \circ \eta \equiv \id
\qquad
\mu_{\!A,\tau,0} \circ \pastmod{\tau}(\eta) \equiv \id
\qquad
\mu_{A,\tau_1 + \tau_2, \tau_3} \circ \mu_{\pastmod{\tau_3}A,\tau_1,\tau_2} \equiv \mu \circ \pastmod{\tau_1}(\mu)
\]
and $(\mu_{A, \tau_1, \tau_2},\mu^{-1}_{A, \tau_1, \tau_2})$ have to be monotone in 
$\tau_1,\tau_2$, similarly to $(\delta,\delta^{-1}) $ above.

In $\Pshf$, we define  
$(\pastmod \tau A)(t) \defeq (\tau \le t) \times A (t \dotminus \tau)$, as past $A$-values, with the 
\emph{side-condition} $\tau \le t$ crucial for the existence of the adjunctions $\pastmod \tau \dashv \futuremod \tau$ 
we require below. We define $(\eta_A,\eta_A^{-1},\mu_A,\mu_A^{-1})$  
similarly to earlier, as identities on $A$-values, combined with the laws of $(0,+,\dotminus)$, 
so as to satisfy the side-conditions. 

With this, we can interpret \emph{contexts} $\Gamma$ as \emph{functors} $\sem{\Gamma} : \Cat \to \Cat$, given by:
\[
\begin{array}{c}
  \sem{\cdot} A \defeq A
  \qquad
  \sem{\Gamma, x \of X} A \defeq \sem{\Gamma} A \times \sem{X}
  \qquad
  \sem{\Gamma, \ctxmod \tau} A \defeq \pastmod \tau (\sem{\Gamma} A)
\end{array}
\]
We interpret contexts as functors to easily manipulate denotations of 
composite contexts, e.g., we then have 
$\iota_{\Gamma;\Gamma';A} : \sem{\Gamma,\Gamma'}A \overset{\cong}{\to} \sem{\Gamma'}(\sem{\Gamma}A)$, 
natural in $A$.

Finally, to formulate the semantics of computation types and terms, we  
require there to be a family of \emph{adjunctions} $\pastmod \tau \dashv \futuremod \tau$, 
i.e., natural transformations $\etaA_{A,\tau} : A \to \futuremod{\tau} (\pastmod{\tau} A)$ (the \emph{unit})
and $\epsA_{A,\tau} : \pastmod{\tau} (\futuremod{\tau} A) \to A$ (the \emph{counit}), for all $\tau \in \mathbb{N}$, satisfying time-graded
variants of standard adjunction laws~\cite{MacLane:CatWM}, given by
\[
\epsA_{\pastmod{\tau}A,\tau} \circ \pastmod{\tau}(\etaA_{A,\tau}) \equiv \id
\qquad
\futuremod{\tau}(\epsA_{A,\tau}) \circ \etaA_{\futuremod{\tau}A,\tau} \equiv \id
\]
We also require 
$(\etaA,\epsA)$ to interact well with the strong monoidal structures:
\[
\hspace{-0.1cm}
\small
\begin{array}{c}
\futuremod {\tau} (\pastmod {0 \le \tau}) \circ \etaA_{A,\tau} \circ \eta^{-1} \circ \varepsilon \equiv \futuremod {0 \le \tau}
\qquad
\futuremod {\tau_1} (\futuremod {\tau_2} (\mu)) \circ \futuremod{\tau_1}(\etaA_{\pastmod{\tau_1}A,\tau_2}) \circ \etaA_{A,\tau_1}
  \equiv \delta \circ \etaA
\\[1.25ex]
\pastmod{0}(\futuremod{0 \le \tau}) \circ \eta \circ \varepsilon^{-1} \circ \epsA_{A,\tau} \equiv \pastmod{0 \le \tau}
\quad
\epsA_{A,\tau_1} \circ \pastmod {\tau_1} (\epsA_{\futuremod{\tau_1} A, \tau_2}) \circ \pastmod {\tau_1} (\pastmod {\tau_2} (\delta))
  \equiv \epsA \circ \mu
\end{array}
\]

\begin{proposition}
It then follows that $\etaA_{A,0} \equiv \varepsilon^{-1}_{\pastmod{0} A} \circ \eta_A$ and $\epsA_{A,0} \equiv \varepsilon_A \circ \eta^{-1}_{\futuremod{0}A}$.
\end{proposition}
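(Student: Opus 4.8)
The plan is to obtain both identities as the $\tau = 0$ instances of the two ``unit'' coherence laws relating the adjunction data $(\etaA,\epsA)$ to the strong monoidal structures of $\futuremod -$ and $\pastmod -$, namely $\futuremod{\tau}(\pastmod{0 \le \tau}) \circ \etaA_{A,\tau} \circ \eta^{-1} \circ \varepsilon \equiv \futuremod{0 \le \tau}$ and $\pastmod{0}(\futuremod{0 \le \tau}) \circ \eta \circ \varepsilon^{-1} \circ \epsA_{A,\tau} \equiv \pastmod{0 \le \tau}$, together with the fact that $\eta$ and $\varepsilon$ are natural isomorphisms, so that the composites appearing in these laws can be inverted. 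The only structural facts needed beyond this are that $\futuremod -$ and $\pastmod -$, being functors out of $(\mathbb{N},\le)$ and $(\mathbb{N},\le)^{\text{op}}$, send the reflexivity morphism $0 \le 0$ to an identity, and that any functor preserves identities.

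For the first identity I would instantiate the first law at $\tau = 0$. Then $\futuremod{0 \le 0}$ and $\pastmod{0 \le 0}$ are identities, and hence so is $\futuremod{0}(\pastmod{0 \le 0})$, so the law collapses to $\etaA_{A,0} \circ \eta_A^{-1} \circ \varepsilon_{\pastmod{0}A} \equiv \id_{\futuremod{0}(\pastmod{0}A)}$. Since $\eta_A^{-1} \circ \varepsilon_{\pastmod{0}A}$ is then an isomorphism with inverse $\varepsilon^{-1}_{\pastmod{0}A} \circ \eta_A$, this forces $\etaA_{A,0} \equiv \varepsilon^{-1}_{\pastmod{0}A} \circ \eta_A$. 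For the second identity I would argue symmetrically from the second law at $\tau = 0$: it reduces to $\eta_{\futuremod{0}A} \circ \varepsilon_A^{-1} \circ \epsA_{A,0} \equiv \id_{\pastmod{0}(\futuremod{0}A)}$, which by the same inversion argument gives $\epsA_{A,0} \equiv \varepsilon_A \circ \eta^{-1}_{\futuremod{0}A}$.

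I expect no genuine obstacle here, since the statement is essentially a degenerate case of the coherence axioms; the only care required is bookkeeping of the object indices of the various natural transformations and, in particular, the direction of the contravariantly indexed comparison maps (so that $\pastmod{0 \le \tau}$ is read as a morphism $\pastmod{\tau} \Rightarrow \pastmod{0}$), so that the domains and codomains in the two laws line up with $A \to \futuremod{0}(\pastmod{0}A)$ and $\pastmod{0}(\futuremod{0}A) \to A$ respectively, and a brief check that ``respecting $\le$'' really delivers the reflexivity comparison maps as identities rather than merely isomorphic to them. As a sanity check one can verify the result in $\Pshf$, where at $\tau = 0$ each of $\etaA_{A,0}$, $\epsA_{A,0}$, $\eta_A$, and $\varepsilon_A$ acts as (essentially) the identity on underlying $A$-values.
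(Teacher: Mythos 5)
Your proof is correct and is precisely the intended derivation: the paper states this proposition without an explicit proof, immediately after the four interaction laws, and the argument is exactly the $\tau = 0$ instantiation of the first and third laws, using that $\futuremod{0 \le 0}$ and $\pastmod{0 \le 0}$ are identities (functoriality on the reflexivity morphism) and that $\eta$ and $\varepsilon$ are isomorphisms. Your type-checking of the composites and the direction of the contravariant comparison maps also matches the diagrams spelled out in the paper's appendix.
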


In $\Pshf$, $\etaA_{A,\tau}$ and $\epsA_{A,\tau}$ are given by identities
on $A$-values, respectively combined with $\tau \le t + \tau$ and monotonicity for $(t \dotminus \tau) + \tau \equiv t$.
For the latter, we crucially know $\tau \le t$ due to the side-condition included in the definition
of $\pastmod{-}$.

We note that modulo the time grades $\tau$, the above structure is 
analogous to the models of the Fitch-style presentation of S4~\cite{Clouston:FitchStyle}, 
where $\boxmodstd$ is modelled by an idempotent comonad, \ctxunlock\,by 
an idempotent monad, and boxing/unboxing by $\ctxunlock\! \dashv \boxmodstd$.
This is also why we present $\futuremod -$ and $\pastmod -$ as comonad- and monad-like.

\subsubsection{Computation Types}

For computation types, we 
require a \emph{$\futuremod{-}$-strong graded monad} $(\T,\etaT,\muT,\strT)$ on 
$\Cat$, with grades in 
$\mathbb{N}$.\footnote{As $\lambdaTau$ does not include
  sub-effecting (see \cref{sect:future-work}), a discretely graded monad $T$ suffices.}
In detail, this means a functor $\T : \mathbb{N} \to [ \Cat , \Cat ]$, 
together with natural transformations $\etaT_A : A \to \T\, 0\, A$ (the \emph{unit}), 
$\muT_{A,\tau_1,\tau_2} : \T\, \tau_1 (\T\, \tau_2\, A) \to \T\, (\tau_1 + \tau_2)\, A$ (the \emph{multiplication}), 
and $\strT_{A,B,\tau} : \futuremod \tau A \times \T\, B\, \tau \to \T\, (A \times B)\, \tau$ (the \emph{strength}), 
with the first two satisfying standard graded monad laws (see~\cite{Katsumata:GradedMonads} or $(\eta,\mu)$ of $\pastmod{-}$). 
Below we only present the laws for $\strT$ because it 
has a novel temporal aspect to it---its first argument appears under   
$\futuremod \tau$. As such, $\strT$ expresses that if we know an $A$-value will
be available after $\tau$ time units, we can push it into computations taking $\tau$-time to execute.

We say that $\T$ is a $\futuremod -$-strong graded monad following the 
parlance of Bierman and de Paiva~\cite{Bierman:ModalLogic}---in their work they model the 
possibility modality ${\raisebox{-0.2mm}{\scalebox{1.5}{$\diamond$}}} A$
as a $\boxmodstd$-strong monad. While the laws governing $\strT$ are not 
overly different from standard graded strength laws~\cite{Katsumata:GradedMonads}, 
we have to correctly account for $\futuremod -$ in them
\[
\begin{array}{c}
\strT_{A,B,0} \circ (\varepsilon^{-1}_A \times \etaT_A) \equiv \etaT_{A \times B}
\quad
\muT_{A \times B,\tau_1,\tau_2} \circ \T\, (\strT) \circ \strT \equiv \strT \circ (\delta^{-1} \times \muT)
\\[1.25ex]
\T\, (\mathsf{snd}) \circ \strT_{A,B,\tau} \equiv \mathsf{snd}
\quad
\T\, (\alpha) \circ \strT \circ (\m \times \id) \circ \alpha^{-1} \equiv \strT_{A,B \times C,\tau} \circ (\id \times \strT)
\end{array}
\]
where  
$\alpha_{A,B,C} : (A \times B) \times C \overset{\cong}{\to} A \times (B \times C)$, and 
$\m_{A,B,\tau}: \futuremod{\tau} A \times \futuremod{\tau} B \to \futuremod{\tau} (A \times B)$ \linebreak
witnesses that $\futuremod \tau$ is monoidal for $\times$, 
which follows from $\futuremod \tau$ being a right adjoint~\cite{MacLane:CatWM}.
Observe that it is the $\futuremod -$-strength that naturally gives $\T$ a temporal flavour---the 
rest of it is standard~\cite{Katsumata:GradedMonads}. 
Below we show that $\strT$ is also mathematically natural, admitting an analogous
characterisation to ordinary strength.

\begin{proposition}
Analogously to ordinary strong and enriched monads~\cite{Kock:StrongFunctors}, $\T$
having $\futuremod -$-strength is equivalent to \emph{$\futuremod -$-enrichment} of $\T$, 
given by morphisms 
${\futuremod \tau(A \expto B) \to (\T\,\tau\,A \expto \T\,\tau\,B)}$ 
respecting $\Cat$'s self-enrichment~\cite{Kelly:EnrichedCats} and $(\etaT,\muT)$.
\end{proposition}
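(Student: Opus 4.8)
The plan is to set up a bijection between $\futuremod{-}$-strengths on $\T$ and $\futuremod{-}$-enrichments of $\T$ by transposing along the cartesian-closed adjunction $(-)\times C \dashv C \expto(-)$, exactly as in Kock's characterisation of ordinary strong monads~\cite{Kock:StrongFunctors}, except that the modality $\futuremod{-}$ is threaded through everything. The one extra ingredient is that each $\futuremod\tau$, being a right adjoint to $\pastmod\tau$, is monoidal for $\times$ via the map $\m_{A,B,\tau} : \futuremod\tau A \times \futuremod\tau B \to \futuremod\tau(A\times B)$ that already appears in the $\strT$-laws; composing it with $\futuremod\tau(\mathsf{ev}_{A,B})$ gives a ``modal evaluation'' $\futuremod\tau(A\expto B)\times\futuremod\tau A \to \futuremod\tau B$, which is what lets the transposition take place at the level of the $\futuremod{-}$-twisted hom-objects $\futuremod\tau(A\expto B)$.

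First I would define the two assignments. Given $\strT$, let $\enr_{A,B,\tau} : \futuremod\tau(A\expto B) \to (\T\,\tau\,A \expto \T\,\tau\,B)$ be the transpose, under $(-)\times\T\,\tau\,A \dashv \T\,\tau\,A \expto(-)$, of the composite
\[
\futuremod\tau(A\expto B)\times\T\,\tau\,A
  \xrightarrow{\ \strT\ }
\T\,\tau\big((A\expto B)\times A\big)
  \xrightarrow{\ \T\,\tau(\mathsf{ev}_{A,B})\ }
\T\,\tau\,B .
\]
Conversely, given $\enr$, let $\strT_{A,B,\tau}$ be the composite of $\big(\enr_{B,A\times B,\tau}\circ\futuremod\tau(p_{A,B})\big)\times\id_{\T\,\tau\,B}$ with evaluation $\mathsf{ev} : (\T\,\tau\,B \expto \T\,\tau\,(A\times B))\times\T\,\tau\,B \to \T\,\tau\,(A\times B)$, where $p_{A,B} : A \to (B\expto(A\times B))$ transposes $\id_{A\times B}$. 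Both families are manifestly natural in $A$ and $B$ and compatible with the grading.

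Second I would verify that these are mutually inverse. The round trip $\strT\mapsto\enr\mapsto\strT$ unwinds using the $\beta$-rule for exponentials (cancelling a transpose against $\mathsf{ev}$), naturality of $\strT$, functoriality of $\T\,\tau$, and the ccc identity $\mathsf{ev}\circ(p_{A,B}\times\id)=\id_{A\times B}$ — the last applied under $\futuremod\tau$, which is harmless since $\futuremod\tau$ is a functor and sends identities to identities, with the coherence of $\m$ (which the excerpt records as a consequence of $\futuremod\tau$ being a right adjoint) used only to treat $\futuremod\tau(\mathsf{ev})\circ\m$ as honest evaluation. The reverse round trip is symmetric.

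Finally I would match the axioms. Under the bijection the four displayed $\strT$-laws translate one-for-one: $\strT_{A,B,0}\circ(\varepsilon^{-1}\times\etaT)\equiv\etaT_{A\times B}$ becomes compatibility of $\enr$ with $\etaT$ viewed as a $\futuremod{-}$-enriched natural transformation $\id\to\T\,0$ (mediated by $\varepsilon : \futuremod 0 \cong \id$); $\T\,(\snd)\circ\strT\equiv\snd$ becomes preservation of identities for $\Cat$'s self-enrichment; the $\alpha$-law becomes preservation of composition; and $\muT\circ\T(\strT)\circ\strT\equiv\strT\circ(\delta^{-1}\times\muT)$ becomes compatibility of $\enr$ with $\muT$ as a $\futuremod{-}$-enriched natural transformation $\T\circ\T\to\T$ (mediated by $\delta : \futuremod{\tau_1+\tau_2}\cong\futuremod{\tau_1}\futuremod{\tau_2}$). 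Each of these is a diagram chase in a ccc carried out ``under $\futuremod\tau$'' and interleaved with the $\varepsilon$, $\delta$, $\m$ coherences. I expect the main obstacle to be not any single law but the bookkeeping of the modality: unlike in Kock's theorem the enrichment is not over $\Cat$ self-enriched but over the $\futuremod{-}$-twisted homs, so one must consistently exploit that $\futuremod\tau$ preserves products (with $\m$, $\varepsilon$, $\delta$ giving the coherence) and that the grade-indexed instances line up — so that, for instance, the $\tau=0$ case of the unit law really does pin down the unit interaction via $\varepsilon$, and the $\tau_1+\tau_2$ reindexing in the multiplication law matches $\delta$. Once the modal evaluation map and its coherences are in place, what remains is routine and of exactly the same shape as for ordinary strong/enriched monads~\cite{Kock:StrongFunctors,Kelly:EnrichedCats}.
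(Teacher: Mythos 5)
Your two transposition formulas are exactly the ones the paper uses (in the appendix, $\enr \defeq \mathsf{curry}(\T\,\tau\,(\mathsf{uncurry}(\id)) \circ \strT_{A\expto B,A,\tau})$ and $\strT \defeq \mathsf{uncurry}(\enr_{B,A\times B,\tau}) \circ \futuremod{\tau}(\mathsf{curry}(\id)) \times \id$), and your matching of the four strength laws against the enriched-functor laws and the $\etaT$/$\muT$ compatibility conditions (mediated by $\varepsilon$, $\delta$, and $\m$) is precisely the correspondence the paper's stated $\futuremod{-}$-enrichment axioms encode. The proposal is correct and takes essentially the same route as the paper.
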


In order to model operations $\op$ and $\tmkw{delay}$ in \cref{sect:interpretation-terms}, 
we require $\T$ to be equipped with \emph{algebraic operations}: 
we ask there to be families of natural transformations
$\opT_{A,\tau} : \gsem{A_\op} \times \futuremod{\tau_\op}(\gsem{B_\op} \expto \T\, \tau\, A) \to \T\, (\tau_\op + \tau)\, A$, 
for all $\op : \tysigop{A_\op}{B_\op}{\tau_\op} \in \Ops$, and 
$\delayT_{A,\tau'}\, \tau : \futuremod {\tau} (\T\, \tau'\, A) \to \T\, (\tau + \tau')\, A$, for all $\tau \in \mathbb{N}$, 
satisfying algebraicity laws~\cite{Plotkin:HandlingEffects}, which state that both
commute with $\muT$ and $\strT$, e.g., 
\[
\strT_{A, B, \tau + \tau'} \circ (\id \times \delayT\, \tau)
  \equiv 
    \delayT_{A \times B,\tau'}\, \tau \circ \futuremod{\tau}(\strT) \circ \m \circ (\delta_{A,\tau,\tau'} \times \id)
\]

In $\Pshf$, we can define $\T$ as the initial algebra
of a corresponding signature functor for operations $\op$ and $\tmkw{delay}$, 
analogously to the usual treatment of algebraic effects~\cite{Bauer:WhatIsAlgebraic}.
Concretely, such $\T$ is determined inductively by three cases
\[
\ruleinfer{}
  {a \in A(t)}
  {\mathsf{ret}\, a \in (\T\, 0\, A)(t)}
\hspace{1.3ex}
\ruleinfer{}
  {a \in \gsem{A_\op}(t) \\\\ k \in (\futuremod{\tau_\op}(\gsem{B_\op} \expto \T\, \tau\, A))(t)}
  {\mathsf{op}\, a\, k \in (\T\, (\tau_\op + \tau)\, A)(t)}
\hspace{1.3ex}
\ruleinfer{}
  {k \in \futuremod {\tau} (\T\, \tau'\, A)(t)}
  {\mathsf{delay}\, \tau\, k \in (\T\, (\tau + \tau')\, A)(t)}
\]
with $(\etaT,\muT,\strT,\opT,\delayT)$ defined in the expected way, e.g., 
$\strT$ is given by recursively traversing a computation of type $\T\, \tau\, B$
and moving the argument of type $\futuremod {\tau} A$ under $\mathsf{ret}$ cases, 
modifying $\tau$ when going under the $\mathsf{op}$ and $\mathsf{delay}$ cases. 

\subsection{Interpretation of Value and Computation Terms}
\label{sect:interpretation-terms}

The interpretation of values and computations is defined 
simultaneously. We only present the temporally interesting cases---full details are 
in the online appendix.

As $\lambdaTau$ does not have sub-effecting and includes 
enough type annotations for typing derivations to be unique, this 
interpretation is \emph{coherent} by construction.

\subsubsection{Values}

We assume a morphism
$\sem{\tmconst{f}} : \gsem{A_1} \times \ldots \times \gsem{A_n} \to \gsem{B}$ for 
every $\tmconst{f} : (A_1,\ldots,A_n) \to B$. We  
interpret a well-typed value $\Gamma \types V : X$ as a morphism
$\sem{\Gamma \types V : X} : \sem{\Gamma}\One  \to \sem{X}$ in $\Cat$ by induction 
on the given typing derivation. 

Most of the value cases are standard, and analogous to other calculi based on 
fine-grain call-by-value~\cite{Levy:FGCBV} and graded 
monads~\cite{Katsumata:GradedMonads}, using the Cartesian-closed structure 
of $\Cat$. The temporally interesting cases are \rulename{Var} and \rulename{Box}, given by
\[
\small
\begin{array}{l}
\sem{\Gamma,x \of X,\Gamma' \types x : X} 
~\defeq~ 
\sem{\Gamma,x \of X,\Gamma'}\One
\overset{\iota}{-\!\!\!\!\!\!\longrightarrow}
\sem{\Gamma'}\big(\sem{\Gamma}\One \times \sem{X}\big)
\\[0.5ex]
\hspace{4cm}
\overset{\e}{-\!\!\!\!\!\!\longrightarrow}
\pastmod{\ctxtime {\Gamma'}}\big(\sem{\Gamma}\One \times \sem{X}\big)
\overset{\epsP}{-\!\!\!\!\!\!\longrightarrow}
\sem{\Gamma}\One \times \sem{X}
\overset{\snd}{-\!\!\!\!\!\!\longrightarrow}
\sem{X}
\\[2ex]
\sem{\Gamma \types \tmbox[\tau]{V} : \tybox{\tau}{X}} 
~\defeq~
\sem{\Gamma}\One
\overset{\etaA}{-\!\!\!\!\!\!\longrightarrow}
\futuremod{\tau} \big(\pastmod{\tau}(\sem{\Gamma}\One)\big)
\overset{\futuremod{\tau}(\sem{V})}{-\!\!\!-\!\!\!-\!\!\!\longrightarrow}
\futuremod{\tau} \sem{X}
\end{array}
\]
where $\e_{A,\Gamma} : \sem{\Gamma}A \to \pastmod{\ctxtime \Gamma}A$
extracts and collapses all temporal modalities in $\Gamma$, and
the counit-like $\epsP_{A,\tau}$ is given by the composite 
$\pastmod{\tau} A \overset{\pastmod{0 \le \tau}_A}{-\!\!\!-\!\!\!-\!\!\!\to} 
  \pastmod{0} A \overset{\eta^{-1}_A}{-\!\!\!\to} A$.
  
\subsubsection{Computations}

We interpret a well-typed computation 
$\Gamma \types M : \tycomp{X}{\tau}$ as a morphism
$\sem{\Gamma \types M : \tycomp{X}{\tau}} : \sem{\Gamma}{\One} \to \T\,\tau\,\sem{X}$ 
in $\Cat$ by induction on the typing derivation. The definition is 
largely unsurprising and follows a pattern similar to~\cite{Katsumata:GradedMonads,Levy:FGCBV}---the novelty 
lies in controlling the occurrences of $\pastmod -$ and $\futuremod -$.

In \rulename{Let}, we use $\pastmod{\tau} \dashv \futuremod{\tau}$
to push the environment ``into the future'', and then follow the standard 
monadic strength-followed-by-multiplication pattern~\cite{Katsumata:GradedMonads,Moggi:ComputationalLambdaCalculus}:
\[
\small
\begin{array}{l}
\sem{\Gamma \types \tmlet{x}{M}{N} : \tycomp{Y}{\tau + \tau'}}
~\defeq~
\sem{\Gamma} \One
\overset{\langle \etaA , \sem{M} \rangle}{-\!\!\!-\!\!\!-\!\!\!-\!\!\!\longrightarrow}
\futuremod{\tau}\big(\pastmod{\tau}(\sem{\Gamma} \One)\big) \times \T\,\tau\,\sem{X}
\\[1ex]
\hspace{2cm}
\overset{\strT}{-\!\!\!\!\!\!\longrightarrow}
\T\, \tau\, \big(\pastmod{\tau}(\sem{\Gamma} \One) \times \sem{X}\big)
\overset{\T\,(\sem{N})}{-\!\!\!-\!\!\!-\!\!\!-\!\!\!\longrightarrow}
\T\, \tau\, (\T\, \tau'\, \sem{Y})
\overset{\muT}{-\!\!\!\!\!\!\longrightarrow}
\T\, (\tau + \tau')\, \sem{Y}
\end{array}
\]
An analogous use of $\pastmod{\tau} \dashv \futuremod{\tau}$ also appears
in the cases for operations, e.g., in
\[
\small
\begin{array}{l}
\sem{\Gamma \types \tmop{op}{V}{\tmcont x M} : \tycomp{X}{\tau_\op + \tau}}
~\defeq~
\sem{\Gamma} \One
\overset{\langle \sem{V} , \etaA \rangle}{-\!\!\!-\!\!\!-\!\!\!-\!\!\!-\!\!\!\longrightarrow}
\gsem{A_\op} \times \futuremod{\tau_\op}\big(\pastmod{\tau_\op}(\sem{\Gamma} \One)\big)
\\[1ex]
\hfill
\overset{\!\!\id \times \futuremod{\tau_\op}(\mathsf{curry}(\sem{M}))}{-\!\!\!-\!\!\!-\!\!\!-\!\!\!-\!\!\!-\!\!\!-\!\!\!-\!\!\!-\!\!\!-\!\!\!-\!\!\!-\!\!\!\longrightarrow\,\,}
\gsem{A_\op} \times \futuremod{\tau_\op}\big(\gsem{B_\op} \expto \T\, \tau\, \sem{X}\big)
\overset{\opT}{-\!\!\!\!\!\!\longrightarrow}
\T\, (\tau_\op + \tau)\, \sem{X}
\end{array}
\]

Next, the \rulename{Unbox} case of the interpretation is defined as
\[
\small
\begin{array}{l}
\sem{\Gamma \types \tmunbox[\tau] V x N : \tycomp{Y}{\tau'}}
~\defeq~
\sem{\Gamma}\One 
\overset{\langle \id , \etaPRA \rangle}{-\!\!\!-\!\!\!-\!\!\!-\!\!\!\longrightarrow}
\sem{\Gamma}\One \times \pastmod{\tau}\big(\sem{\Gamma \ctxminus \tau} \One\big)
\\[1ex]
\hspace{2.5cm}
\overset{\!\!\id \times \pastmod{\tau}(\sem{V})}{-\!\!\!-\!\!\!-\!\!\!-\!\!\!-\!\!\!-\!\!\!\longrightarrow\,\,}
\sem{\Gamma}\One \times \pastmod{\tau}\big(\futuremod{\tau} \sem{X}\big)
\overset{\id \times \epsA}{-\!\!\!-\!\!\!\longrightarrow}
\sem{\Gamma}\One \times \sem{X}
\overset{\sem{N}}{-\!\!\!\!\!\!\longrightarrow}
\T\,\tau'\, \sem{Y}
\end{array}
\]
showing that temporal resources follow the common pattern
in which elimination forms are modelled by counits of adjunctions, whereas units 
model introduction forms (akin to functions). The morphism 
$\etaPRA_{\Gamma,A,\tau} : \sem{\Gamma}A \to \pastmod{\tau}(\sem{\Gamma \ctxminus \tau} A)$
extracts and collapses $\tau$ worth of context modalities in $\Gamma$, as long 
as $\tau \le \ctxtime \Gamma$. 
It is a semantic counterpart to an observation that the context modality $\Gamma, \ctxmod \tau$
is a \emph{parametric right adjoint} to the $\Gamma \ctxminus \tau$ operation, as in recent dependently
typed presentations of Fitch-style modal types~\cite{Gratzer:PRA}, see \cref{sect:related-work}
for further discussion.

Finally, we discuss the interpretation of effect handling. For
this, we additionally require $\Cat$ to have \emph{set-indexed products} 
$\Pi_{i \in I} A_i$ and \emph{handling morphisms}
\[
\begin{array}{l}
\h_{A,\tau,\tau'} : \Pi_{\op \in \Ops} \Pi_{\tau'' \in \mathbb{N}} \big( 
  (\gsem{A_\op} \times \futuremod{\tau_\op}(\gsem{B_\op} \expto \T\, \tau''\, A)) \expto \T\, (\tau_\op + \tau'')\, A \big)
\\[1ex]
\hspace{7.3cm}
\to \T\, \tau\, (\T\, \tau'\, A) \expto \T\, (\tau + \tau')\, A
\end{array}
\] 
satisfying laws which state that $\h_A$ returns a graded 
$\T$-algebra~\cite{Fuji:GradedMonads,McDermott:GradedAlgebras}, e.g., 
we require $\mathsf{uncurry}(\h_{A,0,\tau'}) \circ (\id \times \etaT) \equiv \snd$, 
where $\mathsf{uncurry}$ (and $\mathsf{curry}$ earlier) is part of the universal property 
of $A \expto B$. We also require similar laws for $\h$'s interaction with $\opT$ and $\delayT$.
In $\Pshf$, $\h$ is defined by recursively traversing a given tree, replacing 
all occurrences of $\mathsf{op}\, a\, k$ with respective operation clauses.

Writing $\mathcal{H}$ for the domain of $\h_{\sem{Y},\tau,\tau'}$, the \rulename{Handle} case is then defined as
\[
\small
\begin{array}{l}
\sem{\Gamma \types \tmhandle{M}{H}{x}{N} : \tycomp{Y}{\tau + \tau'}}
~\defeq~
\\[1ex]
\hspace{0.5cm}
\sem{\Gamma} \One
\overset{\langle \id , \langle \etaA , \sem{M} \rangle \rangle}{-\!\!\!-\!\!\!-\!\!\!-\!\!\!-\!\!\!-\!\!\!-\!\!\!-\!\!\!\longrightarrow}
\sem{\Gamma} \One \times \Big(\futuremod{\tau}\big(\pastmod{\tau}(\sem{\Gamma} \One)\big) \times \T\, \tau\, \sem{X}\Big)
\\[1ex]
\hspace{1cm}
\overset{\!\!\id \times \strT}{-\!\!\!-\!\!\!\longrightarrow\,\,}
\sem{\Gamma} \One \times \T\, \tau\, \big(\pastmod{\tau}(\sem{\Gamma} \One) \times \sem{X} \big)
\overset{\id \times \T\, \tau\, (\sem{N})}{-\!\!\!-\!\!\!-\!\!\!-\!\!\!-\!\!\!-\!\!\!-\!\!\!\longrightarrow}
\sem{\Gamma} \One \times \T\, \tau\, \big(\T\, \tau'\, \sem{Y}\big)
\\[1ex]
\hspace{4cm}
\overset{\sem{H} \times \id}{-\!\!\!-\!\!\!-\!\!\!\longrightarrow}
\mathcal{H} \times \T\, \tau\, \big(\T\, \tau'\, \sem{Y}\big)
\overset{\mathsf{uncurry}(\h_{\!\sem{Y},\tau,\tau'})}{-\!\!\!-\!\!\!-\!\!\!-\!\!\!-\!\!\!-\!\!\!-\!\!\!-\!\!\!-\!\!\!-\!\!\!\longrightarrow}
\T\, (\tau + \tau')\, \sem{Y}
\end{array}
\]
where we write $\sem{H}$ for the point-wise interpretation of operation clauses
\[
\sem{\Gamma}\One
\overset{\langle \langle \id \rangle_{\tau'' \in \mathbb{N}} \rangle_{\op \in \Ops}}{-\!\!\!-\!\!\!-\!\!\!-\!\!\!-\!\!\!-\!\!\!-\!\!\!-\!\!\!\longrightarrow}
\Pi_{\op \in \Ops} \Pi_{\tau'' \in \mathbb{N}} \Big(\sem{\Gamma}\One\Big)
\overset{\Pi_{\op \in \Ops} \Pi_{\tau'' \in \mathbb{N}} \big( \mathsf{curry}(\sem{M_\op\, \tau''} \,\circ\, \alpha^{-1}) \big)}{-\!\!\!-\!\!\!-\!\!\!-\!\!\!-\!\!\!-\!\!\!-\!\!\!-\!\!\!-\!\!\!-\!\!\!-\!\!\!-\!\!\!-\!\!\!-\!\!\!-\!\!\!-\!\!\!-\!\!\!-\!\!\!-\!\!\!-\!\!\!-\!\!\!-\!\!\!-\!\!\!\longrightarrow}
\mathcal{H}
\]

\subsection{Renamings, Substitutions, and Soundness}

We now show how syntactic renamings and substitutions 
relate to semantic morphism composition, using which 
we then prove the interpretation to be \emph{sound}.

\begin{proposition}
\label{prop:renaming-semantic}
Given $\rho : \Gamma \,{\leadsto}\, \Gamma'$ and $\Gamma \types J$, then 
$\sem{J[\rho]} \equiv \sem{J} \circ \sem{\rho}_{\One}$, where the 
interpretation of renamings $\sem{\rho}_A : \sem{\Gamma'}A \to \sem{\Gamma}A$ 
is defined by induction on the derivation of 
$\rho : \Gamma \leadsto \Gamma'$, with the morphism $\sem{\rho}_A$ also natural in $A$.
\end{proposition}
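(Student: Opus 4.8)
The plan is to pin down the interpretation of renamings together with its $A$-naturality, and then prove the displayed equation by induction on the typing derivation of $\Gamma \types J$, with the real work concentrated in two compatibility lemmas about the non-standard semantic morphisms.

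First I would define $\sem{\rho}_A : \sem{\Gamma'}A \to \sem{\Gamma}A$ by recursion on the derivation of $\rho : \Gamma \leadsto \Gamma'$. The reflexive and transitive steps are sent to $\id$ and to composition in the reverse order, $\sem{\rho_1;\rho_2}_A \defeq \sem{\rho_1}_A \circ \sem{\rho_2}_A$; the congruence extension of $\rho$ along a variable $x \of X$ is sent to $\sem{\rho}_A \times \id_{\sem{X}}$ and the extension along $\ctxmod{\tau}$ to $\pastmod{\tau}(\sem{\rho}_A)$; and each base rule is sent to the evident component of the (co)monoidal and adjunction structure of $\pastmod{-}$: weakening to a product projection, $\mu^r$ and its inverse to $\mu$ and $\mu^{-1}$, the rule promoting $\ctxmod{\tau}$ to $\ctxmod{\tau'}$ for $\tau \le \tau'$ to the appropriate $\pastmod{\tau \le \tau'}$, and the ``make a variable earlier'' rule $\Gamma, x \of X, \ctxmod{\tau} \leadsto \Gamma, \ctxmod{\tau}, x \of X$ to $\langle \pastmod{\tau}(\fst), \epsP \circ \pastmod{\tau}(\snd) \rangle$, using the counit-like $\epsP$ of \cref{sect:interpretation-terms}. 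Its $A$-naturality follows by the same recursion, since every clause is built by whiskering natural transformations of $\pastmod{-}$ and $\futuremod{-}$ with functors and by composing with the $A$-natural isomorphisms $\iota$ that re-associate composite contexts.

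The equation $\sem{J[\rho]} \equiv \sem{J} \circ \sem{\rho}_{\One}$ is then proved by induction on the derivation of $\Gamma \types J$, uniformly in $\rho$. The cases that only use the Cartesian-closed structure (\rulename{Const}, \rulename{Pair}, \rulename{Unit}, \rulename{Fun}, \rulename{Return}, \rulename{Apply}, \rulename{Match}) are routine: renaming commutes with subterms by the induction hypothesis, $\sem{-}$ turns the extended renaming $\rho, x \of X$ into $\sem{\rho}_{\One} \times \id$ on the nose, and one invokes naturality of pairing, currying and evaluation. The cases whose interpretation mentions $\pastmod{-}$, $\futuremod{-}$, the adjunction $\pastmod{\tau} \dashv \futuremod{\tau}$, or the graded-monad structure — \rulename{Box}, \rulename{Let}, \rulename{Op}, \rulename{Delay}, \rulename{Handle} — all follow one recipe: apply the induction hypothesis to the subvalues and subcomputations (for a subterm typed under $\ctxmod{\tau}$, using $\sem{\rho, \ctxmod{\tau}}_{\One} = \pastmod{\tau}(\sem{\rho}_{\One})$), and then slide $\sem{\rho}_{\One}$ leftwards past $\etaA$, $\strT$, $\muT$, $\opT$, $\delayT$ and $\h$ using their naturality, landing on $\sem{J} \circ \sem{\rho}_{\One}$. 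For instance, in \rulename{Box} naturality of $\etaA_{-,\tau}$ rewrites $\futuremod{\tau}(\pastmod{\tau}(\sem{\rho}_{\One})) \circ \etaA_{\sem{\Gamma'}\One,\tau}$ as $\etaA_{\sem{\Gamma}\One,\tau} \circ \sem{\rho}_{\One}$, which is exactly what is needed.

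The two genuinely new cases, \rulename{Var} and \rulename{Unbox}, rest on compatibility lemmas for the collapsing morphisms, each proved separately by induction on the derivation of $\rho$. For \rulename{Var}, using the fact from the proof of \cref{thm:renaming} that $x \in_\tau \Gamma$ entails $\rho\, x \in_{\tau'} \Gamma'$ with $\tau \le \tau'$ (and that $\Gamma \leadsto \Gamma'$ implies $\ctxtime{\Gamma} \le \ctxtime{\Gamma'}$, so grades line up), I would show that the projection-to-$x$ composite $\snd \circ \epsP \circ \e \circ \iota$ for $\Gamma$ factors through $\sem{\rho}_{\One}$ as the corresponding composite for $\Gamma'$, the surplus $\tau' \dotminus \tau$ worth of modality on the $\Gamma'$ side being absorbed by the $\pastmod{0 \le \tau'}$ inside $\epsP$ together with monotonicity of $\pastmod{-}$. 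For \rulename{Unbox}, the key is that $\etaPRA$ is compatible with $\sem{\rho}$ and the restricted renaming $\rho \ctxminus \tau : \Gamma \ctxminus \tau \leadsto \Gamma' \ctxminus \tau$ from \cref{thm:renaming}, namely $\pastmod{\tau}(\sem{\rho \ctxminus \tau}_A) \circ \etaPRA_{\Gamma',A,\tau} \equiv \etaPRA_{\Gamma,A,\tau} \circ \sem{\rho}_A$; granting this, the case closes by the same sliding argument plus naturality of $\epsA$. I expect this last lemma to be the main obstacle: it must be proved in lock-step with the recursive definitions of $\rho \ctxminus \tau$ and $\Gamma \ctxminus \tau$, and in particular its case for a modality block $\ctxmod{\tau'}$ splits according to whether $\tau \le \tau'$ — precisely the delicate point already flagged as the hard case of the syntactic renaming and substitution theorems.
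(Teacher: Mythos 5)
Your proposal matches the paper's proof: both proceed by induction on the derivation of $\Gamma \types J$ after defining $\sem{\rho}_A$ by recursion on $\rho$, and your key compatibility lemma for \rulename{Unbox}, $\pastmod{\tau}(\sem{\rho \ctxminus \tau}_A) \circ \etaPRA \equiv \etaPRA \circ \sem{\rho}_A$, is exactly the auxiliary lemma the paper singles out. The additional detail you supply (the clause-by-clause definition of $\sem{\rho}_A$ and the $\in_\tau$-based treatment of \rulename{Var}) is consistent with what the paper defers to its appendix and formalisation.
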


\begin{proposition}
\label{prop:substitution-semantic}
Given $\Gamma, x \of X, \Gamma' \types J$ and $\Gamma \types W : X$, we have 
$\sem{J[W/x]} \equiv \sem{J} \circ \iota_{\Gamma,x \of X;\Gamma';\One}^{-1} \circ \sem{\Gamma'}\big(\langle \id , \sem{W} \rangle\big) \circ \iota_{\Gamma;\Gamma';\One}$, where $(\iota,\iota^{-1})$ are discussed in \cref{sect:interpretation-types}.
\end{proposition}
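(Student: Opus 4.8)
The plan is to prove \cref{prop:substitution-semantic} by induction on the derivation of $\Gamma, x \of X, \Gamma' \types J$, in parallel with \cref{prop:renaming-semantic} (which it will invoke for the weakening/contraction-like bookkeeping that shows up in the variable and $\tmkw{unbox}$ cases). First I would unfold the right-hand side: writing $\sigma_A \defeq \iota_{\Gamma,x\of X;\Gamma';A}^{-1} \circ \sem{\Gamma'}(\langle \id, \sem{W}\rangle_A) \circ \iota_{\Gamma;\Gamma';A} : \sem{\Gamma,\Gamma'}A \to \sem{\Gamma, x\of X, \Gamma'}A$, I would first record that $\sigma$ is natural in $A$ (each of $\iota$, $\iota^{-1}$, and $\sem{\Gamma'}(-)$ is), and that it is compatible with context extension and with the context modality, i.e. $\sigma_{\Gamma,y\of Y} = \sigma \times \id_{\sem{Y}}$ under the evident $\iota$-isomorphisms, and $\sigma_{\Gamma,\ctxmod\tau}$ is $\pastmod\tau(\sigma)$ transported along $\iota$. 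These two "peeling" lemmas are what make the inductive step on $\Gamma'$ go through: when the last typing rule extends $\Gamma'$ by a variable or a modality and recurses, the induction hypothesis fires on the smaller $\Gamma'$ and these lemmas re-assemble $\sigma$ for the full context.

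With that infrastructure, the structural cases (\rulename{Unit}, \rulename{Pair}, \rulename{Const}, \rulename{Fun}, \rulename{Return}, \rulename{Apply}, \rulename{Match}, \rulename{Box}, \rulename{Op}, \rulename{Delay}, \rulename{Let}, \rulename{Handle}) are all routine: the interpretation of each such term is a fixed composite built from the interpretations of its subterms precomposed with universal-property maps ($\iota$, products, exponential transposes, $\etaA$, $\strT$, $\muT$, $\opT$, $\delayT$, $\h$) that are natural in the ambient context functor; so $\sigma$ slides past them and one is left with the induction hypotheses on the subterms, possibly after applying the peeling lemmas to re-express $\sigma$ in the extended contexts that the premises use (e.g. $\Gamma, x\of X, \Gamma', \ctxmod\tau, y\of Y$ in \rulename{Let}). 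The \rulename{Var} case splits on whether the variable being looked up is $x$ itself, a variable of $\Gamma$, or a variable of $\Gamma'$; in each subcase one chases $\e$, $\epsP$, $\snd$ against $\sigma$, using the definition of $\sem{W}$ when the variable is $x$, and using naturality of $\e$ and $\epsP$ otherwise.

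The main obstacle is the \rulename{Unbox} case, exactly as flagged in the text. There $V$ is typed in $(\Gamma, x\of X, \Gamma')\ctxminus\tau$, and the combinatorial content is that the substitution commutes with the "time-travelling" operation: one must show $\big((\Gamma, x\of X, \Gamma')\ctxminus\tau\big)$ equals $\Gamma, x\of X, (\Gamma'\ctxminus\tau)$ when $\tau \le \ctxtime{\Gamma'}$ and equals $\Gamma\ctxminus(\tau\dotminus\ctxtime{\Gamma'})$ (with $x$ deleted) otherwise, matching the syntactic case split in the substitution theorem, and that $\etaPRA_{\Gamma,x\of X,\Gamma',\One,\tau}$ relates to $\etaPRA$ of the shorter context via $\sigma$ and either $\sigma\ctxminus\tau$ (first case) or a projection discarding $\pastmod{(\cdots)}\sem{X}$ (second case). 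In the first case I would invoke the analogue of the $\rho\ctxminus\tau$ construction from \cref{thm:renaming} at the semantic level — i.e. a lemma $\etaPRA \circ \sigma = \pastmod\tau(\sigma\ctxminus\tau) \circ \etaPRA$, proved by induction on $\Gamma'$ using the peeling lemmas and the strong-monoidal/adjunction coherences for $\pastmod-$ and $\futuremod-$ — and then the rest of the \rulename{Unbox} composite ($\pastmod\tau(\sem{V})$, $\epsA$, $\sem{N}$) slides past by naturality plus the induction hypotheses for $V$ and $N$; in the second case $V[W/x]=V$ and one shows $\etaPRA$ followed by $\sigma$ collapses $\sigma$ away on the $\pastmod\tau$-branch (it lands in modalities strictly below where $x$ was inserted), so the composite literally equals $\sem{\tmunbox[\tau]{V}{y}{N}} \circ \sigma$ after handling $N$. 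I would isolate these two $\etaPRA$-vs-$\sigma$ identities as auxiliary lemmas (the Agda-formalisation caveat in the paper suggests they are the fiddly ones) and treat everything else as mechanical diagram-chasing.
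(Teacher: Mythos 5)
Your proposal matches the paper's proof: both proceed by induction on the typing derivation, isolate \rulename{Unbox} as the hard case with the split on whether $\tau \le \ctxtime{\Gamma'}$ (mirroring the syntactic substitution theorem), and reduce it to auxiliary lemmas commuting $\etaPRA$ with the substitution/context-isomorphism morphisms --- the paper's stated key lemma, $\etaPRA \circ \iota \equiv \pastmod{\tau}(\iota) \circ \etaPRA$ when $\tau \le \ctxtime{\Gamma'}$, is exactly the $\iota$-component of your $\etaPRA$-vs-$\sigma$ identity. The approach is essentially identical, just spelled out in more detail than the paper's sketch.
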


\begin{proof}
We prove both results by induction on the derivation of ${\Gamma \types J}$. 
The proofs are unsurprising but require us to prove 
auxiliary lemmas about recursively defined renamings and 
semantic morphisms.
For example, for \cref{prop:renaming-semantic}, 
we show  
$\etaPRA \circ \sem{\rho} \equiv \pastmod{\tau}(\sem{\rho \ctxminus \tau}) \circ \etaPRA 
  : \sem{\Gamma'} A \to \pastmod{\tau}(\sem{\Gamma \ctxminus \tau}A)$, and
for \cref{prop:substitution-semantic}, that  
$\etaPRA \circ \iota \equiv \pastmod{\tau}\big(\iota\big) \circ \etaPRA 
  : \sem{\Gamma, \Gamma'} A \to \pastmod{\tau}\big(\sem{\Gamma' \ctxminus \tau}(\sem{\Gamma} A)\big)$, 
when $\tau \le \ctxtime \Gamma'$.
\end{proof}

\begin{theorem}
Given $\Gamma \types I \equiv J$ derived using the rules in \cref{sect:equational-theory}, 
then $\sem{I} \equiv \sem{J}$.
\end{theorem}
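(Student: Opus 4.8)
The plan is to prove soundness of the equational theory by induction on the derivation of $\Gamma \types I \equiv J$. Since the paper states it omits the standard equivalence, congruence, and substitutivity rules, most of these cases are routine: reflexivity, symmetry, and transitivity follow from the corresponding properties of equality of morphisms in $\Cat$; congruence rules follow because the interpretation of each term former is built by composing and pairing the interpretations of subterms, so equal subterm denotations yield equal composite denotations; and the substitutivity rules are exactly where \cref{prop:substitution-semantic} does the work, letting us replace $\sem{J[W/x]}$ by a composite built from $\sem{J}$ and $\sem{W}$ and then rewrite underneath. So the real content is checking the non-trivial equations in \cref{fig:equations} one by one.

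For the $\beta/\eta$-equations of the unit, product, and function types, the verifications are the classical ones for a fine-grain call-by-value calculus interpreted in a Cartesian-closed category, using the universal properties of $\One$, $A \times B$, and $A \expto B$; the only extra bookkeeping is shuffling the context functors $\sem{\Gamma}$ and the isomorphisms $\iota$ around, together with \cref{prop:substitution-semantic} to handle the substitutions appearing on the right-hand sides. The monadic equations for $\tmkw{return}$ and $\tmkw{let}$ reduce to the graded monad laws for $(\etaT, \muT)$ plus the strength laws for $\strT$, in the by-now-standard pattern for monadic semantics — but here one must additionally track the $\pastmod{\tau} \dashv \futuremod{\tau}$ adjunction units $\etaA$ that the \rulename{Let} interpretation inserts, using the adjunction triangle laws and the compatibility laws between $(\etaA, \epsA)$ and the monoidal structures $(\varepsilon, \delta, \eta, \mu)$ to make the $\pastmod{-}$/$\futuremod{-}$ traffic cancel. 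The algebraicity equations for $\op$ and $\tmkw{delay}$ follow directly from the stipulated algebraicity laws for $\opT$ and $\delayT$ (commutation with $\muT$ and $\strT$), again modulo the adjunction-unit bookkeeping. The handler equations are discharged from the handler-morphism laws for $\h$: the $\tmkw{return}$-case from the $\etaT$-law ($\mathsf{uncurry}(\h_{A,0,\tau'}) \circ (\id \times \etaT) \equiv \snd$), the $\op$-case and the $\tmkw{delay}$-case from the analogous laws for $\h$'s interaction with $\opT$ and $\delayT$. Finally, the $\beta/\eta$-laws for $\tmkw{box}$/$\tmkw{unbox}$ are the adjunction triangle identities for $\pastmod{\tau} \dashv \futuremod{\tau}$ in disguise: the $\beta$-law unfolds to $\epsA \circ \pastmod{\tau}(\etaA) \equiv \id$ (composed with $\futuremod{\tau}(\sem{V})$), and the $\eta$-law to the other triangle, once we account for the $\etaPRA$ morphism that the \rulename{Unbox} interpretation uses to collapse $\tau$ worth of context modalities.

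The main obstacle I anticipate is not any single equation but the sheer amount of naturality and coherence bookkeeping needed to align the $\pastmod{-}$, $\futuremod{-}$, and $\T$ layers — in particular, the interplay between $\etaPRA$ (a composite extracting $\tau$ modalities from a context) and substitution/renaming. The key technical lemmas are already flagged in \cref{prop:substitution-semantic}: identities such as $\etaPRA \circ \sem{\rho} \equiv \pastmod{\tau}(\sem{\rho \ctxminus \tau}) \circ \etaPRA$ and $\etaPRA \circ \iota \equiv \pastmod{\tau}(\iota) \circ \etaPRA$ (for $\tau \le \ctxtime{\Gamma'}$). I would prove these auxiliary lemmas first, by induction on contexts, and then use them freely; without them the $\tmkw{unbox}$-$\beta$ case and any case where a substitution reaches under an $\tmkw{unbox}$ become intractable. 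A secondary source of friction is that the context interpretation is functorial, $\sem{\Gamma} : \Cat \to \Cat$, so every rewrite happens "under $\sem{\Gamma}(-)$ applied to $\One$"; keeping the isomorphisms $\iota_{\Gamma;\Gamma';A}$ and their naturality squares straight is where most of the routine-but-error-prone work lives. With those lemmas in hand, each equation collapses to one of the monad, strength, algebraicity, handler, or adjunction laws assumed in \cref{sect:interpretation-types} and \cref{sect:interpretation-terms}.
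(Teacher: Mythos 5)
Your proposal matches the paper's proof: induction on the derivation of $\Gamma \types I \equiv J$, using \cref{prop:renaming-semantic} and \cref{prop:substitution-semantic} to unfold renamings and substitutions, and discharging each equation against the assumed categorical structure (graded monad, strength, algebraicity, handler, and adjunction laws). Your additional per-equation breakdown and the flagged $\etaPRA$ lemmas are exactly the auxiliary facts the paper relies on, so there is nothing to correct.
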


\begin{proof}
The proof proceeds by induction on the derivation of $\Gamma \types I \equiv J$, 
using \cref{prop:renaming-semantic} and \cref{prop:substitution-semantic} to unfold
the renamings and substitutions in the equations of \cref{sect:equational-theory}, 
and using the properties of the abstract structure we required $\Cat$ to have.
\end{proof}


\section{Quotienting Delays}
\label{sect:delay-equations}

Observe that in $\lambdaTau$ the computations 
$\tmdelay \tau {(\tmdelay {\tau'} M)}$ and $\tmdelay {(\tau + \tau')} M$ cannot be proved 
equivalent, though in some situations this might be desired.

In order to deem the above two programs (and others alike) equivalent, we extend $\lambdaTau$'s 
equational theory with the following natural equations for $\tmkw{delay}$s:
\[
\begin{array}{c}
\tmdelay 0 M \equiv M
\qquad
\tmdelay \tau {(\tmdelay {\tau'} M)} \equiv \tmdelay {(\tau + \tau')} M
\end{array}
\]

\begin{theorem}
If the algebraic operations $\delayT\!$ of $\T$ satisfy analogous two
equations, the interpretation of \cref{sect:semantics} is sound for this extended equational theory.
\end{theorem}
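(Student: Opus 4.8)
The plan is to reuse the soundness proof of the base theory and extend its induction by exactly two cases. Recall that soundness (the theorem just before \cref{sect:delay-equations}) is proved by induction on the derivation of $\Gamma \types I \equiv J$. Adding the axioms $\tmdelay 0 M \equiv M$ and $\tmdelay \tau {(\tmdelay {\tau'} M)} \equiv \tmdelay {(\tau + \tau')} M$ contributes precisely two new base cases to this induction; all other cases---reflexivity, symmetry, transitivity, congruence, and substitutivity---go through verbatim, since neither the interpretation of terms nor \cref{prop:renaming-semantic} and \cref{prop:substitution-semantic} are affected by the new axioms. So the only work is to discharge the two new base cases, using the hypothesis that $\delayT$ satisfies the corresponding semantic laws, which I take to be $\delayT_{A,\tau'}\, 0 \equiv \varepsilon_{\T\,\tau'\,A}$ (modulo the strict identification $0 + \tau' \equiv \tau'$ in $\mathbb{N}$) and $\delayT_{A,\tau''}\,(\tau + \tau') \equiv \delayT_{A,\tau' + \tau''}\,\tau \circ \futuremod{\tau}(\delayT_{A,\tau''}\,\tau') \circ \delta$ (modulo associativity of $+$).

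For $\tmdelay 0 M \equiv M$, I would unfold $\sem{\Gamma \types \tmdelay 0 M : \tycomp{X}{\tau'}}$ via the (omitted) \rulename{Delay} clause, which follows the same pattern as the \rulename{Op} case of \cref{sect:interpretation-terms}: it is $\delayT_{\sem{X},\tau'}\,0 \circ \futuremod{0}(\sem{M}) \circ \etaA_{\sem{\Gamma}\One,0}$, where $\sem{M} : \pastmod{0}(\sem{\Gamma}\One) \to \T\,\tau'\,\sem{X}$ is the denotation of $M$ in context $\Gamma, \ctxmod 0$. Rewriting $\delayT_{\sem X,\tau'}\,0$ as $\varepsilon$, using naturality of $\varepsilon$ to slide it past $\futuremod{0}(\sem{M})$, and then invoking the Proposition in \cref{sect:interpretation-types} stating $\etaA_{A,0} \equiv \varepsilon^{-1}_{\pastmod{0}A} \circ \eta_A$ together with \cref{prop:renaming-semantic} applied to the weakening renaming $\Gamma, \ctxmod 0 \leadsto \Gamma$ (whose denotation is built from $\eta^{-1}$), the composite collapses to $\sem{\Gamma \types M : \tycomp{X}{\tau'}}$, which is the right-hand side since $M$ on the two sides of the equation is the same raw term typed in contexts related by that renaming.

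For $\tmdelay \tau {(\tmdelay {\tau'} M)} \equiv \tmdelay {(\tau + \tau')} M$, I would unfold both sides into composites of $\etaA$'s, $\futuremod{-}$ applied to subterm denotations, and $\delayT$'s. On the left the two $\delayT$'s are separated by a $\futuremod{\tau}(-)$ and two layers of $\etaA$; on the right there is a single $\delayT_{\sem X,\tau''}\,(\tau + \tau')$, while the innermost $\sem{M}$ lives in $\Gamma, \ctxmod{\tau+\tau'}$ rather than $\Gamma, \ctxmod\tau, \ctxmod{\tau'}$, the two denotations being related by the structural renaming of \cref{thm:renaming} that collapses $\ctxmod\tau,\ctxmod{\tau'}$ into $\ctxmod{\tau+\tau'}$ (or its inverse), whose denotation is built from $\mu$ (\cref{prop:renaming-semantic}). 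The two composites then become equal by a chase combining (i) the assumed nested-$\delayT$ law, (ii) the semantics-section law $\futuremod{\tau}(\futuremod{\tau'}(\mu)) \circ \futuremod{\tau}(\etaA_{\pastmod{\tau}A,\tau'}) \circ \etaA_{A,\tau} \equiv \delta \circ \etaA$ relating $\etaA$ to $\delta$ and $\mu$, and (iii) naturality of $\etaA$, $\delta$, $\mu$ and functoriality of $\futuremod{-}$.

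I expect the main obstacle to be bookkeeping rather than mathematics: pinning down the exact instances of $\etaA$, $\delta$, $\mu$ appearing in the unfolded denotations, and verifying that the two laws assumed of $\delayT$ carry precisely the coercions for the strict identifications $0 + \tau' \equiv \tau'$ and $(\tau + \tau') + \tau'' \equiv \tau + (\tau' + \tau'')$ needed to make the two composites literally equal. Beyond that there is no new content: in particular the existing algebraicity laws for $\delayT$ (its commutation with $\muT$ and $\strT$) are not needed here, since the two new equations constrain $\tmkw{delay}$ only in isolation, and their consequences under $\tmkw{let}$, $\tmkw{handle}$, etc.\ are delivered by the congruence cases of the induction.
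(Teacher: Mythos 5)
Your proposal is correct and follows the only natural route here, which is also the one the paper (which states this theorem without an explicit proof) clearly intends: extend the soundness induction with the two new base cases, discharge them using the assumed unit and multiplication-style laws for $\delayT$ together with the $\etaA$/$\delta$/$\mu$ compatibility law and \cref{prop:renaming-semantic} applied to the renamings $\eta^r$ and $\mu^r$ relating the contexts of $M$ on the two sides. Your identification of the precise ``analogous equations'' for $\delayT$ and your observation that the algebraicity laws play no role in these two cases are both accurate.
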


For the concrete model in $\Pshf$, we have to \emph{quotient} $\T$~\cite{Katsumata:FlexiblePresentations}
by these two equations---the resulting graded monad is determined inductively by the cases
\[
\ruleinfer{}
  {k \in (\TS\, \tau\, A)(t)}
  {\mathsf{comp}\, k \in (\T\, \tau\, A)(t)}
\quad
\ruleinfer{}
  {\tau > 0 \\ k \in \futuremod {\tau} (\TS\, \tau'\, A)(t)}
  {\mathsf{delay}\, \tau\, k \in (\T\, (\tau + \tau')\, A)(t)}
\]
\[
\ruleinfer{}
  {a \in A(t)}
  {\mathsf{ret}\, a \in (\TS\, 0\, A)(t)}
\quad
\ruleinfer{}
  {a \in \gsem{A_\op}(t) \\ k \in (\futuremod{\tau_\op}(\gsem{B_\op} \expto \T\, \tau\, A))(t)}
  {\mathsf{op}\, a\, k \in (\TS\, (\tau_\op + \tau)\, A)(t)}
\vspace{0.1cm}
\]
where $(\T\, \tau\, A)(t)$ and $(\TS\, \tau\, A)(t)$ are defined simultaneously in such a way that 
only non-zero, non-consecutive $\mathsf{delay}$s can appear in the tree structure. 


\section{Related and Future Work}
\label{sect:related-future-work}

\subsection{Related Work}
\label{sect:related-work}

We contribute to two prominent areas: (i) modal types and (ii) graded monads.

As noted in \cref{sect:overview-modal-types}, \emph{modal types} provide a mathematically natural
means for capturing many aspects of programming. Adding to~\cref{sect:overview-modal-types}, types 
corresponding to the \emph{eventually} and \emph{always modalities} of temporal logics capture \emph{functional 
reactive programming (FRP)}~\cite{Cave:FairFRP,Jeltsch:FRP,Krishnaswami:FRP}, including a 
combination with linearity and time-annotations to model resources~\cite{Jeltsch:Resources}, 
where \emph{all} values are annotated with inhabitation times.
Recently, FRP has also been studied in Fitch-style~\cite{Bahr:SimplyRaTT}.
Starting with Nakano~\cite{Nakano:Guarded}, modal types 
have also been used for \emph{guarded recursion}, even in the 
dependently typed setting~\cite{Bahr:ClocksTicking,Bizjak:GuardedTT,Mannaa:ClockSemantics}, 
including in Fitch-style~\cite{Birkedal:DepRightAdjoints}.

We also note that $\lambdaTau$'s time grades $\tau$ and the $\Gamma \ctxminus \tau$ operation are 
closely related to recent dependently typed Fitch-style frameworks. Namely,~\cite{Gratzer:MTT} 
develops a \emph{multimodal type theory} (MTT) where types $\futuremod \mu X$ are 
indexed by 1-cells $\mu$ of a strict 2-category (a mode theory). 
The time grades $\tau$ of $\lambdaTau$ are an example of such mode theories, given by the 
delooping of $\mathbb{N}$, i.e., by a single 0-cell, $\tau$s as 1-cells, and $\tau \le \tau'$s as 2-cells. While 
ensuring the admissibility of and naturality under substitutions, MTT with its indirect elimination rule 
for $\futuremod \mu X$ is weaker than earlier systems (such as~\cite{Birkedal:DepRightAdjoints}). The direct-style elimination rule is recovered 
in~\cite{Gratzer:PRA} by observing that in addition to $\Gamma, \ctxmod \mu$ being a left adjoint 
to $\futuremod \mu X$, it should further form a 
\emph{parametric right adjoint (PRA)}~\cite{Carboni:ConnLimits,Weber:FamilialFunctors}
to contexts of the form $\Gamma / (r : \mu)$, where $r$ is a substitution  
$\cdot, \ctxmod \mu \leadsto \Gamma$. The operation $\Gamma \ctxminus \tau$ in $\lambdaTau$
is an instance of this: $\mu$ is a $\tau$, $r$ corresponds to the condition $\tau \le \ctxtime \Gamma$ 
in \rulename{Unbox}, contexts $\Gamma / (r : \mu)$ are given by $\Gamma \ctxminus \tau$, 
and the PRA situation is witnessed by renamings 
$((\Gamma \ctxminus \tau), \ctxmod \tau) \leadsto \Gamma$, when $\tau \le \ctxtime \Gamma$, and 
$\Gamma \leadsto ((\Gamma, \ctxmod \tau) \ctxminus \tau)$.

\emph{Graded monads} provide a uniform framework for  
different effect systems and effect-based  
analyses~\cite{Fuji:GradedMonads,Katsumata:GradedMonads,Katsumata:FlexiblePresentations,McDermott:GradedAlgebras,Mellies:GradedMonads}. 
A major contribution of ours is showing that 
context modalities can inform continuations of preceding computations' effects.
While the theory of graded 
monads
can be instantiated with any ordered monoid, we focus on natural
numbers to model time, but do not expect 
complications generalising $\lambdaTau$ to other structures with 
same properties as $(\mathbb{N}, 0, +, \dotminus, \le)$, 
and perhaps even to grading $T$ and $\pastmod -$, $\futuremod -$ with different structures, akin to~\cite{Gaboardi:EffectsCoefects}.

Our use of $\tybox{\tau}{X}$ to restrict when resources
are available is somewhat reminiscent of 
\emph{coeffects}~\cite{Brunel:Coeffects,Ghica:ResourceSemiring,Petricek:Coeffects,Petricek:CoeffectCalculus} 
and \emph{quantitative type systems}~\cite{Atkey:QTT,McBride:QTT,Moon:GrQTT}. In these works, 
variables are graded by (semi)ring-valued $r$s, as $x \of_r X$, counting how many times and 
in which ways $x$ is used, enabling applications such as liveness and dataflow 
analyses~\cite{Petricek:Coeffects}. Semantically, these systems often interpret $x \of_r X$
using a graded comonad, as $\boxmodstd_r X$, where one can access $X$ only if $r \equiv 1$. 
Of such works, the closest to ours is that of Gaboardi et al.~\cite{Gaboardi:EffectsCoefects}, 
who combine coeffects with effectful programs via distributive laws between the grades of 
coeffects and effects, allowing coeffectful analyses to be propagated through effectful computations. 

We also note that the type $\tybox{\tau}{X}$ can be intuitively also viewed as a 
temporally-graded variant of \emph{promise types}~\cite{Haller:Futures,Schwinghammer:Thesis}, 
in that it expresses that a value of type $X$ will be available in the future, but with 
additional time guarantees.

\subsection{Future Work}
\label{sect:future-work}

Currently, $\lambdaTau$ does not support
\emph{sub-effecting}: we cannot deduce from ${\tau \le \tau'}$ and ${\Gamma \types M : \tycomp{X}{\tau}}$ 
that ${\Gamma \types M : \tycomp{X}{\tau'}}$. Of course, we can simulate 
this by inserting $\tau' \!\dotminus \tau$ worth of explicit $\tmkw{delay}$s into 
$M$, but this is extremely intensional, fixing where $\tmkw{delay}$s happen. 
In particular, we cannot type equations such as $\tmlet{x}{(\tmreturn{V})}{N} \equiv N[V/x]$
if $\tmreturn{V}$ was sub-effected to $\tau > 0$, with the 
$\ctxmod \tau$ in $N$'s context the culprit. However, when considering sub-effecting 
as a \emph{coercion} $\mathsf{coerce}_{\tau \le \tau'}\, M$, we believe we can 
add it by considering equations stating that it will
produce \emph{all the possible ways} how $\tau' \!\dotminus \tau$ worth of $\tmkw{delay}$s 
could be inserted into $M$. Of course, this will require a more complex non-deterministic semantics.
 
It would be neat if $\lambdaTau$ also included \emph{recursion} in a way that 
programs could make use of the temporal discipline.
This is likely unattainable for general recursion, but we hope that \emph{primitive recursion}
(say, on natural numbers) can be added via \emph{type-dependency}
of time grades $\tau$ on the values being recursed on.

It would be interesting to combine $\lambdaTau$ with linear~\cite{Girard:LinearLogic} 
and separation logics~\cite{Jung:Iris,Reynolds:SeparationLogic} to  
model \emph{linear} and \emph{spatial properties} of temporal resources. 
Another goal would be to add \emph{concurrency}, e.g., using 
(multi)handlers~\cite{Bauer:AlgebraicEffects,Convent:DooBeeDooBeeDoo,Dolan:MulticoreOCaml}.
We also plan to look into capturing \emph{expiring} and 
\emph{available-for-an-interval} style resources.

Further, we plan to study $\lambdaTau$'s \emph{operational semantics}, 
namely, one that takes time seriously and does not model $\tmkw{delay}$s simply 
as uninterpreted operations~\cite{Bauer:AlgebraicEffects}, 
together with developing a \emph{prototype}, and proving 
\emph{normalisation} akin to~\cite{Gratzer:MTTNBE,Valliappan:NBE}. 

We also plan to study the \emph{completeness} of the denotational semantics of $\lambdaTau$.
For such semantic investigations, it could be beneficial to also study the 
general theory of the kinds of temporally aware graded algebraic effects used in this paper, 
by investigating their \emph{algebras} and \emph{equational 
presentations}~\cite{Katsumata:FlexiblePresentations,McDermott:GradedAlgebras}.


\section{Conclusion}
\label{sect:conclusion}

We have shown how a temporal, time-graded variant of Fitch-style modal type systems, 
when combined with an effect system based on graded monads, provides a natural 
framework for safe programming with temporal resources.
To this end, we developed a modally typed, effectful, equationally-presented 
core calculus, and equipped it with a sound denotational semantics based on strong monoidal functors
(for modelling modalities) and graded monads (for modelling effects). The calculus also includes
temporally aware graded algebraic effects and effect handlers, with the continuations of the 
former knowing that an operation's worth of additional time has passed before they start executing, 
and where the user-defined effect handlers are guaranteed to respect this temporal discipline.


\paragraph{Acknowledgements}

We thank Andrej Bauer, Juhan-Peep Ernits, Niccolò Veltri, and Niels Voorneveld for  
useful discussions. 
We also thank one of the reviewers for drawing our attention to the recent work on 
presenting Fitch-style modal types in terms of parametric right adjoints, and its relationship to the 
work presented in this paper.
This material is based upon work supported by the Air Force Office of 
Scientific Research under award number FA9550-21-1-0024.

\bibliographystyle{splncs04}
\bibliography{references}

\newpage
\appendix
\makeatletter\def\@seccntformat#1{\appendixname~\csname the#1\endcsname: }\makeatother

\section{Renamings}
\label{sect:appendix-renamings}

We expand here on the definitions and results that we use in the proof of \cref{thm:renaming}.

First, we present the full definition of the \emph{renaming relation} $\rho : \Gamma \leadsto \Gamma'$.
As noted in the paper, it is given by the reflexive-transitive-congruent closure of the desired structural rules. 
In detail, it is given by the following cases:

\parbox{\textwidth}{
  \mathtoolsset{original-shortintertext=false,below-shortintertext-sep=0pt,above-shortintertext-sep=0pt}
  \begin{align*}
    \mathsf{id}^r & : \Gamma \leadsto \Gamma
    \\
    \mathsf{\circ}^r & :  (\Gamma' \leadsto \Gamma'') \longrightarrow (\Gamma \leadsto \Gamma') \longrightarrow (\Gamma \leadsto \Gamma'')
    \\[1ex]
    \mathsf{wk}^r & : \Gamma \leadsto (\Gamma, x \of X)
    \\[1ex]
    \mathsf{var}^r_{x \of X \in \Gamma} & : (\Gamma, y \of X) \leadsto \Gamma
    \\[1ex]
    \eta^r & : (\Gamma, \ctxmod 0) \leadsto \Gamma
    \\
    (\eta^r)^{-1} & : \Gamma \leadsto (\Gamma, \ctxmod 0)
    \\[1ex]
    \mu^r & : (\Gamma, \ctxmod {\tau + \tau'}) \leadsto (\Gamma, \ctxmod \tau, \ctxmod {\tau'})
    \\
    (\mu^r)^{-1} & : (\Gamma, \ctxmod \tau, \ctxmod {\tau'}) \leadsto (\Gamma, \ctxmod {\tau + \tau'})
    \\[1ex]
    \mathsf{mon}^r_{\tau \le \tau'} & : (\Gamma, \ctxmod {\tau}) \leadsto (\Gamma, \ctxmod {\tau'})
    \\[1ex]
    \mathsf{cong\text{-}var}^r & : (\Gamma \leadsto \Gamma') \longrightarrow ((\Gamma, x \of X) \leadsto (\Gamma', x \of X)) 
    \\
    \mathsf{cong\text{-}mod}^r & : (\Gamma \leadsto \Gamma') \longrightarrow ((\Gamma, \ctxmod \tau) \leadsto (\Gamma', \ctxmod \tau)) 
    \\
  \end{align*}
}
As in the rest of the paper, we assume that all contexts involved in the definition are well-formed (i.e., 
consist of distinct variables), and all mentioned variables are fresh for the contexts that they extend 
(e.g., $y$ for $\Gamma$ in $\Gamma, y \of X$).

This renaming relation then has many desirable properties, some of which we use in the proof of
\cref{thm:renaming}. These results are all proved by either one-line arguments or by fairly straightforward 
induction on the derivations of the given judgements. Full details of these proofs can be found in the Agda formalisation.

\begin{proposition}[Basic Properties of the Renaming Relation]
\begin{enumerate}
\item If $\Gamma \leadsto \Gamma'$ and $x \in \Gamma'$, then we have $(\Gamma, y \of X) \leadsto \Gamma'$.
\item If $\Gamma \leadsto \Gamma'$, then we have $(\Gamma, \Gamma'') \leadsto (\Gamma', \Gamma'')$.
\item $\Gamma \leadsto (\Gamma,\Gamma')$
\item $\Gamma \leadsto (\Gamma, \ctxmod \tau)$
\item $(\Gamma, x \of X, y \of Y) \leadsto (\Gamma, y \of Y, x \of X)$
\item $(\Gamma, \ctxmod \tau, x \of X) \leadsto (\Gamma, x \of X, \ctxmod \tau)$
\item $(\Gamma, x \of X, y \of X) \leadsto (\Gamma, x \of X)$
\item If $\Gamma \leadsto \Gamma'$, then we have $\ctxtime \Gamma \le \ctxtime {\Gamma'}$.
\item If $\tau \le \ctxtime \Gamma$, then we have $((\Gamma \ctxminus \tau), \ctxmod \tau) \leadsto \Gamma$.
\item $\Gamma \leadsto ((\Gamma, \ctxmod \tau) \ctxminus \tau)$
\item $(\Gamma \ctxminus \tau) \leadsto \Gamma$
\item If $\tau_1 \le \tau_2$, then we have $(\Gamma \ctxminus \tau_2) \leadsto (\Gamma \ctxminus \tau_1)$.
\item If $\rho : \Gamma \leadsto \Gamma'$ and $x \of X \in_\tau \Gamma$, 
  then $\rho\, x \of X \in_{\tau'} \Gamma'$ for some $\tau'$ with $\tau \le \tau'$.
\item If $\rho : \Gamma \leadsto \Gamma'$, then we have $\rho \ctxminus \tau : (\Gamma \ctxminus \tau) \leadsto (\Gamma' \ctxminus \tau)$ .
\end{enumerate}
\end{proposition}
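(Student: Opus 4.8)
The plan is to dispatch the fourteen items in three groups: items (1)--(7) by directly exhibiting an appropriate renaming out of the generators, items (8)--(12) by induction following either the structure of $\rho$ or the recursive clauses of $\ctxminus$, and items (13)--(14) as the substantive core. For (1)--(7) I would compose the primitive generators with $\circ^r$ and push them under a context suffix with $\mathsf{cong\text{-}var}^r$/$\mathsf{cong\text{-}mod}^r$. For example, (4) is $\mathsf{mon}^r_{0 \le \tau} \circ^r (\eta^r)^{-1}$; (7) is literally $\mathsf{var}^r_{x \of X \in (\Gamma, x \of X)}$; (1) is $\mathsf{var}^r_{x \of X \in \Gamma'} \circ^r \mathsf{cong\text{-}var}^r(\rho)$, contracting the freshly weakened $y$ into the $x$ already present in $\Gamma'$; and the two exchanges (5)--(6) follow from the standard weaken-then-contract recipe, inserting fresh copies of the two variables (resp.\ of the modality, using (4)) under a congruence rule with $\mathsf{wk}^r$ and then contracting the originals with $\mathsf{var}^r$, all up to $\alpha$-equivalence. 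Items (2) and (3) are the same idea plus an outer induction on the appended context $\Gamma''$ (resp.\ $\Gamma'$).

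For the second group, item (8)---which asserts $\ctxtime \Gamma \le \ctxtime{\Gamma'}$ for $\rho : \Gamma \leadsto \Gamma'$---I would prove by induction on the derivation of $\rho$, noting that each generator preserves the captured time except $\mathsf{mon}^r_{\tau \le \tau'}$, which increases it, with $\mathsf{id}^r$ and $\circ^r$ immediate. Items (9)--(12), all about $\ctxminus$, I would prove by induction on $\Gamma$ and $\tau$ following the recursive clauses of $\ctxminus$, discharging the modality clauses with $\mu^r$, $(\mu^r)^{-1}$, $\mathsf{mon}^r$ and the arithmetic identities $(\tau' \dotminus \tau) + \tau = \tau'$ for $\tau \le \tau'$ and the split $\Gamma \ctxminus (\tau_1 + \tau_2) = (\Gamma \ctxminus \tau_1) \ctxminus \tau_2$, which I would first record as a preliminary lemma; the variable clauses of $\ctxminus$ are absorbed by $\mathsf{wk}^r$, and (12) then falls out of (11) via the split lemma.

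The substantive part, and the step I expect to be the main obstacle, is item (13), on which item (14) also depends. I would prove (13) by induction on the derivation of $\rho : \Gamma \leadsto \Gamma'$, maintaining the invariant that if $x$ has exactly $\tau$ worth of context modalities to its right in $\Gamma$, then $\rho\, x$ has \emph{at least} $\tau$ worth to its right in $\Gamma'$. The delicate cases are: $\mathsf{var}^r$, where $y$ is contracted into an existing $x$ lying further left, so one must check that the left-hand $x$ carries at least as many modalities to its right as the contracted last variable; $\mathsf{mon}^r$ and $\mu^r$/$(\mu^r)^{-1}$, where the \emph{count} of modalities to the right of a variable genuinely changes, which is exactly why the statement asks for $\le$ rather than $=$ and forces reasoning with $+$ and $\dotminus$; and the congruence cases, where the bookkeeping must be threaded through the suffix. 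Finally, (14) constructs $\rho \ctxminus \tau$ by recursion on $\rho$ mirroring the clauses of $\ctxminus$: composition and congruences recurse structurally, the monoidal generators reuse the arithmetic of (9)--(12), and in the $\mathsf{wk}^r$/$\mathsf{var}^r$ cases one invokes (13) to ensure that a variable surviving the $\ctxminus \tau$ truncation on the source side is mapped to one surviving it on the target side, so the weakening/contraction remains well-formed after applying $\ctxminus \tau$.
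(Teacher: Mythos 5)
Your proposal is correct and follows essentially the same route as the paper, which itself only records that items are "one-line arguments or fairly straightforward induction on the derivations" (deferring details to the Agda formalisation): your explicit composites of generators for (1)--(7), the induction on $\rho$ for (8) and (13), the induction along the clauses of $\ctxminus$ with the $\dotminus$/$+$ arithmetic for (9)--(12), and the recursive construction of $\rho \ctxminus \tau$ guarded by (13) for (14) all match the paper's intended arguments. The specific witnesses you give (e.g.\ $\mathsf{mon}^r_{0 \le \tau} \circ^r (\eta^r)^{-1}$ for (4) and $\mathsf{var}^r_{x \of X \in (\Gamma, x \of X)}$ for (7)) type-check against the generators listed in the appendix.
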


In (13), the relation $x \of X \in_\tau \Gamma$, is defined inductively by the following cases:
\[
\ruleinfer{}
  {\phantom{...}}
  {x \of X \in_0 \Gamma, x \of X}
\qquad
\ruleinfer{}
  {x \of X \in_\tau \Gamma \quad x \neq y}
  {x \of X \in_\tau \Gamma, y \of Y}
\qquad
\ruleinfer{}
  {x \of X \in_\tau \Gamma }
  {x \of X \in_{\tau + \tau'} \Gamma, \ctxmod {\tau'}}
\]
Intuitively, this relation captures that if $x \of X \in_\tau \Gamma$, then $x \of X \in \Gamma$ and 
there is $\tau$ worth of context modalities to the right of $x$ in $\Gamma$, i.e., $x$ was brought into 
scope $\tau$ time units ago.

In (14), the operation $\rho \ctxminus \tau : (\Gamma \ctxminus \tau) \leadsto (\Gamma' \ctxminus \tau)$
on renamings is defined by induction on the derivation of $\rho : \Gamma \leadsto \Gamma'$.
The definition is somewhat long but fairly straightforward---it proceeds by removing 
variables and context modalities from the right-hand sides of $\Gamma$ and $\Gamma'$ until 
$\tau$ becomes $0$, see the definition of the corresponding operation $\Gamma \ctxminus \tau$ 
on contexts for intuition. Properties such as (13) then ensure that variables in $\Gamma \ctxminus \tau$
are never mapped by $\rho$ to outside of $\Gamma' \ctxminus \tau$. Full details of this 
definition can be found in the Agda formalisation.


\section{Denotational Semantics}
\label{sect:appendix-semantics}

In this appendix we spell out full details of the more important semantic definitions 
that were abbreviated or left out in \cref{sect:semantics} due to space constraints.

\subsection{Strong Monoidal Functor for Temporal Resources}

We require a functor
\[
\futuremod - : (\mathbb{N},\le) \to [ \Cat , \Cat ]
\]
and families of natural isomorphisms (counit and comultiplication)
\[
\varepsilon_A : \futuremod 0 A \overset{\cong}{\to} A
\qquad
\delta_{A,\tau_1,\tau_2} : \futuremod {\tau_1 + \tau_2} A \overset{\cong}{\to} \futuremod {\tau_1} (\futuremod {\tau_2} A)
\]
satisfying time-graded variants of the laws of a comonad
\[
\small
\xymatrix@C=3em@R=3.5em@M=0.5em{
\futuremod{0 + \tau} A \ar[r]^-{\delta_{A,0,\tau}} \ar[dr]_{\equiv} & \futuremod{0}(\futuremod{\tau} A) \ar[d]^{\varepsilon_{\futuremod{\tau}A}}
&
\futuremod{\tau + 0} A \ar[r]^-{\delta_{A,\tau,0}} \ar[dr]_{\equiv} & \futuremod{\tau}(\futuremod{0}A) \ar[d]^{\futuremod{\tau}(\varepsilon_A)}
\\
& \futuremod{\tau} A
&
& \futuremod{\tau} A
}
\]
\[
\small
\xymatrix@C=5em@R=3.5em@M=0.5em{
\futuremod{(\tau_1 + \tau_2) + \tau_3} A \ar[d]_{\equiv} \ar[r]^{\delta_{A,\tau_1+\tau_2,\tau_3}} & \futuremod{\tau_1 + \tau_2}(\futuremod{\tau_3} A) \ar[r]^{\delta_{\futuremod{\tau_3}A, \tau_1, \tau_2}} & \futuremod{\tau_1}(\futuremod{\tau_2}(\futuremod{\tau_3} A)) \ar[d]^{\id}
\\
\futuremod{\tau_1 + (\tau_2 + \tau_3)} A \ar[r]_{\delta_{A,\tau_1,\tau_2 + \tau_3}} & \futuremod{\tau_1}(\futuremod{\tau_2 + \tau_3} A) \ar[r]_{\futuremod{\tau_1}(\delta_{A,\tau_2,\tau_3})} & \futuremod{\tau_1}(\futuremod{\tau_2}(\futuremod{\tau_3} A))
}
\]
and $(\delta_{A,\tau_1,\tau_2},\delta^{-1}_{A,\tau_1,\tau_2})$ have to be monotone in 
$\tau_1, \tau_2$: if $\tau_1 \le \tau'_1$ and $\tau_2 \le \tau'_2$, then
\[
\small
\xymatrix@C=5em@R=3.5em@M=0.5em{
\futuremod{\tau_1 + \tau_2} A \ar[rr]^{\futuremod{\tau_1 + \tau_2 \le \tau'_1 + \tau'_2}_A} \ar[d]_{\delta_{A,\tau_1,\tau_2}} & & \futuremod{\tau'_1 + \tau'_2} A \ar[d]^{\delta_{A,\tau'_1,\tau'_2}}
\\
\futuremod{\tau_1}(\futuremod{\tau_2} A) \ar[r]_{\futuremod{\tau_1 \le \tau'_1}_{\futuremod{\tau_2} A}} & \futuremod{\tau'_1}(\futuremod{\tau_2} A) \ar[r]_{\futuremod{\tau'_1}(\futuremod{\tau_2 \le \tau'_2}_A)} & \futuremod{\tau'_1}(\futuremod{\tau'_2} A)
}
\vspace{0.2cm}
\]
A similar diagram follows for $\delta^{-1}_{A}$ from 
$(\delta_{A,\tau_1,\tau_2},\delta^{-1}_{A,\tau_1,\tau_2})$ forming isomorphisms.

\subsection{Strong Monoidal Functor for Context Modalities}

We require a contravariant functor 
\[
\pastmod - : (\mathbb{N},\le)^{\text{op}} \to [ \Cat , \Cat ]
\]
and families of natural isomorphisms (unit and multiplication)
\[
\eta_A : A \overset{\cong}{\to} \pastmod 0 A 
\qquad
\mu_{A, \tau_1, \tau_2} : \pastmod {\tau_1} (\pastmod {\tau_2} A) \overset{\cong}{\to} \pastmod {\tau_1 + \tau_2} A
\]
satisfying time-graded variants of the laws of a monad
\[
\small
\xymatrix@C=3em@R=3.5em@M=0.5em{
\pastmod{\tau} A \ar[r]^-{\eta_{\pastmod{\tau} A}} \ar[dr]_{\equiv} & \pastmod{0}(\pastmod{\tau}A) \ar[d]^{\mu_{A,0,\tau}}
&
\pastmod{\tau}A \ar[r]^-{\pastmod{\tau}(\eta_A)} \ar[dr]_{\equiv} & \pastmod{\tau}(\pastmod{0} A) \ar[d]^{\mu_{A,\tau,0}}
\\
& \pastmod{0 + \tau}A
&
& \pastmod{\tau + 0}A
}
\]
\[
\small
\xymatrix@C=5em@R=3.5em@M=0.5em{
\pastmod{\tau_1}(\pastmod{\tau_2}(\pastmod{\tau_3} A)) \ar[r]^{\mu_{\pastmod{\tau_3} A, \tau_1, \tau_2}} \ar[d]_{\id} & \pastmod{\tau_1 + \tau_2}(\pastmod{\tau_3} A) \ar[r]^{\mu_{A,\tau_1 + \tau_2,\tau_3}} & \pastmod{(\tau_1 + \tau_2) + \tau_3} A \ar[d]^{\equiv}
\\
\pastmod{\tau_1}(\pastmod{\tau_2}(\pastmod{\tau_3} A)) \ar[r]_{\pastmod{\tau_1}(\mu_{A,\tau_2,\tau_3})} & \pastmod{\tau_1}(\pastmod{\tau_2 + \tau_3} A) \ar[r]_{\mu_{A,\tau_1,\tau_2 + \tau_3}} & \pastmod{\tau_1 + (\tau_2 + \tau_3)} A
}
\vspace{0.2cm}
\]
and $(\mu_{A,\tau_1,\tau_2},\mu^{-1}_{A,\tau_1,\tau_2})$ have to be monotone in 
$\tau_1, \tau_2$: if $\tau_1 \!\le\! \tau'_1$ and $\tau_2 \!\le\! \tau'_2$, then
\[
\small
\xymatrix@C=5em@R=3.5em@M=0.5em{
\pastmod{\tau'_1}(\pastmod{\tau'_2} A) \ar[d]_{\mu_{A,\tau'_1,\tau'_2}} \ar[r]^{\pastmod{\tau_1 \le \tau'_1}_{\pastmod{\tau'_2} A}} & \pastmod{\tau_1}(\pastmod{\tau'_2} A) \ar[r]^{\pastmod{\tau_1}(\pastmod{\tau_2 \le \tau'_2}_A)} & \pastmod{\tau_1}(\pastmod{\tau_2} A) \ar[d]^{\mu_{A,\tau_1,\tau_2}}
\\
\pastmod{\tau'_1 + \tau'_2} A \ar[rr]_{\pastmod{\tau_1 + \tau_2 \le \tau'_1 + \tau'_2}_A} && \pastmod{\tau_1 + \tau_2} A
}
\]
A similar diagram follows for $\mu^{-1}_{A}$ from 
$(\mu_{A,\tau_1,\tau_2},\mu^{-1}_{A,\tau_1,\tau_2})$ forming isomorphisms.

\subsection{Adjunctions for Boxing and Unboxing Resources}

We require adjunctions
\[
\pastmod \tau \dashv \futuremod \tau
\]
for all $\tau \in \mathbb{N}$, i.e., families of natural transformations (unit and counit)
\[
\etaA_{A,\tau} : A \to \futuremod{\tau} (\pastmod{\tau} A)
\qquad
\epsA_{A,\tau} : \pastmod{\tau} (\futuremod{\tau} A) \to A
\]
satisfying the two standard adjunction laws
\[
\small
\xymatrix@C=4em@R=3.5em@M=0.5em{
\pastmod{\tau}A \ar[r]^-{\pastmod{\tau}(\etaA_{A,\tau})} \ar[dr]_{\id} & \pastmod{\tau}(\futuremod{\tau}(\pastmod{\tau}A)) \ar[d]^{\epsA_{\pastmod{\tau}A,\tau}}
&
\futuremod{\tau}A \ar[r]^-{\etaA_{\futuremod{\tau}A,\tau}} \ar[dr]_{\id} & \futuremod{\tau}(\pastmod{\tau}(\futuremod{\tau} A)) \ar[d]^{\futuremod{\tau}(\epsA_{A,\tau})}
\\
& \pastmod{\tau}A
&
& \futuremod{\tau}A
}
\]
and interacting well with the strong-monoidal structure of $\pastmod -$ and $\futuremod -$
\[
\small
\xymatrix@C=3em@R=3.5em@M=0.5em{
\futuremod{0}(\pastmod{0} A) \ar[rr]^-{\futuremod{0 \le \tau}_{\pastmod{0} A}} \ar[d]_{\varepsilon_{\pastmod{0} A}} & & \futuremod{\tau}(\pastmod{0} A)
\\
\pastmod{0}A \ar[r]_-{\eta^{-1}} & A \ar[r]_-{\etaA_{A,\tau}} & \futuremod{\tau}(\pastmod{\tau}A) \ar[u]_{\futuremod{\tau}(\pastmod{0 \le \tau}_A)}
}
\]
\[
\small
\xymatrix@C=3em@R=3.5em@M=0.5em{
\pastmod{\tau}(\futuremod{\tau} A) \ar[d]_{\epsA_{A,\tau}} \ar[rr]^{\pastmod{0 \le \tau}_{\futuremod{\tau} A}} && \pastmod{0}(\futuremod{\tau} A)
\\
A \ar[r]_-{\varepsilon^{-1}_A} & \futuremod{0} A \ar[r]_-{\eta_{\futuremod{0}A}} & \pastmod{0}(\futuremod{0} A) \ar[u]_{\pastmod{0}(\futuremod{0 \le \tau}_A)}
}
\]
\[
\small
\xymatrix@C=4em@R=3.5em@M=0.5em{
A \ar[r]^-{\etaA_{A,\tau_1}} \ar[d]_{\etaA_{A,\tau_1 + \tau_2}} & \futuremod{\tau_1}(\pastmod{\tau_2} A) \ar[r]^-{\futuremod{\tau_1}(\etaA_{\pastmod{\tau_1}A,\tau_2})} & \futuremod{\tau_1}(\futuremod{\tau_2}(\pastmod{\tau_2}(\pastmod{\tau_1} A))) \ar[d]^{\futuremod{\tau_1}(\futuremod{\tau_2}(\mu_{A,\tau_2,\tau_1}))}
\\
\futuremod{\tau_1 + \tau_2}(\pastmod{\tau_1 + \tau_2} A) \ar[rr]_-{\delta_{\pastmod{\tau_1 + \tau_2} A,\tau_1,\tau_2}} && \futuremod{\tau_1}(\futuremod{\tau_2}(\pastmod{\tau_1 + \tau_2} A))
}
\vspace{0.2cm}
\]
\[
\small
\xymatrix@C=4em@R=3.5em@M=0.5em{
\pastmod{\tau_1}(\pastmod{\tau_2}(\futuremod{\tau_1 + \tau_2} A)) \ar[r]^-{\pastmod{\tau_1}(\pastmod{\tau_2}(\delta_{A,\tau_2,\tau_1}))} \ar[d]_{\mu_{\futuremod{\tau_1 + \tau_2} A,\tau_1, \tau_2}} & \pastmod{\tau_1}(\pastmod{\tau_2}(\futuremod{\tau_2}(\futuremod{\tau_1} A))) \ar[r]^-{\pastmod{\tau_1}(\epsA_{\futuremod{\tau_1}A,\tau_2})} & \pastmod{\tau_1}(\futuremod{\tau_1} A) \ar[d]^{\epsA_{A,\tau_1}}
\\
\pastmod{\tau_1 + \tau_2}(\futuremod{\tau_1 + \tau_2}A) \ar[rr]_-{\epsA_{A,\tau_1 + \tau_2}} && A
}
\]

\subsection{$\futuremod -$-Strong Graded Monad for Computations}

We require a functor
\[
\T : \mathbb{N} \to [ \Cat , \Cat ]
\]
together with natural transformations (unit, multiplication, and strength)
\[
\etaT_A : A \to \T\, 0\, A
\qquad
\muT_{A,\tau_1,\tau_2} : \T\, \tau_1 (\T\, \tau_2\, A) \to \T\, (\tau_1 + \tau_2)\, A
\]
\[
\strT_{A,B,\tau} : \futuremod \tau A \times \T\, B\, \tau \to \T\, (A \times B)\, \tau
\]
satisfying standard graded monad laws
\[
\small
\xymatrix@C=3em@R=3.5em@M=0.5em{
\T\, \tau\, A \ar[dr]_{\equiv} \ar[r]^-{\etaT_{\T \tau A}} & \T\, 0\, (\T\, \tau\, A) \ar[d]^{\muT_{A,0,\tau}}
&
\T\, \tau\, A \ar[r]^-{\T\, \tau\, (\etaT_{A})} \ar[dr]_{\equiv} & \T\, \tau\, (\T\, 0\, A) \ar[d]^{\mu_{A,\tau,0}}
\\
& \T\, (0 + \tau)\, A
&
& \T\, (\tau + 0)\, A
}
\] 
\[
\small
\xymatrix@C=4.5em@R=3.5em@M=0.5em{
\T\, \tau_1\, (\T\, \tau_2\, (\T\, \tau_3\, A)) \ar[r]^-{\muT_{\T \tau_3 A,\tau_1,\tau_2}} \ar[d]_{\id} & \T\, (\tau_1 + \tau_2)\, (\T\, \tau_3\, A) \ar[r]^-{\muT_{A,\tau_1 + \tau_2,\tau_3}} & \T\, ((\tau_1 + \tau_2) + \tau_3)\, A \ar[d]^{\equiv}
\\
\T\, \tau_1\, (\T\, \tau_2\, (\T\, \tau_3\, A)) \ar[r]_-{\T\, \tau_1\, (\muT_{A,\tau_2,\tau_3})} & \T\, \tau_1\, (\T\, (\tau_2 + \tau_3)\, A) \ar[r]_-{\muT_{A,\tau_1,\tau_2 + \tau_3}} & \T\, (\tau_1 + (\tau_2 + \tau_3))\, A
}
\]
and $\futuremod -$-strength laws
\[
\small
\xymatrix@C=3em@R=3.5em@M=0.5em{
A \times B \ar[r]^-{\varepsilon^{-1}_A \times \etaT_A} \ar[dr]_{\etaT_{A \times B}} & \futuremod{0} A \times \T\, 0\, B \ar[d]^{\strT_{A,B,0}}
&
\futuremod{\tau}A \times \T\, \tau\, B \ar[r]^-{\strT_{A,B,\tau}} \ar[dr]_{\snd} & \T\, \tau\, (A \times B) \ar[d]^{\T\, \tau\, (\snd)}
\\
& \T\, 0\, (A \times B)
&
& \T\, \tau\, B
}
\]
\[
\hspace{-0.2cm}
\small
\xymatrix@C=2.5em@R=3.5em@M=0.5em{
\futuremod{\tau_1}(\futuremod{\tau_2}A) \times \T\, \tau_1\, (\T\, \tau_2\, B) \ar[r]^-{\strT_{\futuremod{\tau_2}A,\T \tau_2 B,\tau_1}} \ar[d]_{\delta^{-1}_{A,\tau_1,\tau_2} \times \muT_{B,\tau_1,\tau_2}} & T\, \tau_1\, (\futuremod{\tau_2} A \times T\, \tau_2\, B) \ar[r]^-{\T\, \tau_1\, (\strT_{A,B,\tau_2})} & \T\, \tau_1\, (\T\, \tau_2\, (A \times B)) \ar[d]^{\muT_{A \times B,\tau_1,\tau_2}}
\\
\futuremod{\tau_1 + \tau_2} A \times \T\, (\tau_1 + \tau_2)\, B \ar[rr]_-{\strT_{A,B,\tau_1 + \tau_2}} && \T\, (\tau_1 + \tau_2)\, (A \times B)
}
\]
\[
\small
\xymatrix@C=3em@R=3.5em@M=0.5em{
\futuremod{\tau} A \times (\futuremod{\tau} B \times \T\, \tau\, C) \ar[r]^-{\id \times \strT_{B,C,\tau}} \ar[d]_{\alpha^{-1}} & \futuremod{\tau} A \times \T\, \tau\, (B \times C) \ar[r]^-{\strT_{A,B \times C,\tau}} & \T\, \tau\, (A \times (B \times C))
\\
(\futuremod{\tau} A \times \futuremod{\tau} B) \times \T\, \tau\, C \ar[r]_-{\m_{A,B,\tau} \times \id} & \futuremod{\tau}(A \times B) \times \T\, \tau\, C \ar[r]_-{\strT_{A \times B,C,\tau}} & \T\, \tau\, ((A \times B) \times C) \ar[u]_{\T\, \tau\, \alpha}
}
\]

\subsection{$\futuremod -$-Enrichment of Graded Monads}

We require morphisms
\[
\enr_{A,B,\tau} : \futuremod \tau(A \expto B) \to (\T\,\tau\,A \expto \T\,\tau\,B)
\]
satisfying time-graded analogues of enriched functor laws
\[
\small
\xymatrix@C=5em@R=4.5em@M=0.5em{
\One \times \T\, \tau\, A \ar[d]_{\snd} \ar[r]^-{\mathsf{curry}(\snd) \times \id} & (A \expto A) \times \T\, \tau\, A \ar[d]^{\etaF_{\!A \expto A,\tau} \times \id}
\\
\T\, \tau\, A & \futuremod{\tau}(A \expto A) \times \T\, \tau\, A \ar[l]^-{\mathsf{uncurry}(\enr_{\!A,A,\tau})}
}
\]
\[
\small
\xymatrix@C=6em@R=4.5em@M=0.5em{
\futuremod{\tau}(B \expto C) \times \big(\futuremod{\tau}(A \expto B) \times T\, \tau\, A \big) \ar[r]^-{\id \times \mathsf{uncurry}(\enr_{\!A,B,\tau})} \ar[d]_{\alpha^{-1}} & \futuremod{\tau}(B \expto C) \times T\, \tau\, B \ar[ddd]^{\mathsf{uncurry}(\enr_{\!B,C,\tau})}
\\
\big(\futuremod{\tau}(B \expto C) \times \futuremod{\tau}(A \expto B)\big) \times T\, \tau\, A \ar[d]_{\m_{B \expto C, A \expto B, \tau} \times \id}
\\
\futuremod{\tau}((B \expto C) \times (A \expto B)) \times T\, \tau\, A \ar[d]_{\futuremod{\tau}(\mathsf{comp}) \times \id}
\\
\futuremod{\tau}(A \expto C) \times T\, \tau\, A \ar[r]_-{\mathsf{uncurry}(\enr_{\!A,C,\tau})} & T\, \tau\, C
}
\]
and interacting well with the unit and multiplication of $\T$
\[
\small
\xymatrix@C=5em@R=4.5em@M=0.5em{
(A \expto B) \times A \ar[r]^-{\varepsilon^{-1}_{A \expto B} \times \etaT_{A}} \ar[d]_{\mathsf{uncurry}(\id)} & \futuremod{0}(A \expto B) \times \T\, 0\, A \ar[d]^{\mathsf{uncurry}(\enr_{\!A,B,0})}
\\
B \ar[r]_-{\etaT_{B}} & \T\, 0\, B
}
\]
\[
\scriptsize
\xymatrix@C=3em@R=4.5em@M=0.5em{
\futuremod{\tau_1}(\futuremod{\tau_2}(A \expto B)) \times \T\, \tau_1\, (\T\, \tau_2\, A) \ar[r]^-{\delta^{-1}_{A \expto B,\tau_1,\tau_2} \times \muT_{A,\tau_1,\tau_2}} \ar[d]_{\futuremod{\tau_1}(\mathsf{curry}(\id)) \times \id } & \futuremod{\tau_1 + \tau_2}(A \expto B) \times \T\, (\tau_1 + \tau_2)\, A \ar[ddd]^{\mathsf{uncurry}(\enr_{\!A,B,\tau_1 + \tau_2})}
\\
\futuremod{\tau_1}\big(\T\, \tau_2\, A \expto (\futuremod{\tau_2}(A \expto B) \times \T\, \tau_2\, A)\big) \times \T\, \tau_1\, (\T\, \tau_2\, A) \ar[d]_{\mathsf{uncurry}(\enr)}
\\
\T\, \tau_1\, (\futuremod{\tau_2}(A \expto B) \times \T\, \tau_2\, A) \ar[d]_{\T\, \tau_1\, (\mathsf{uncurry}(\enr_{\!A,B,\tau_2}))}
\\
\T\, \tau_1\, (\T\, \tau_2\, A) \ar[r]_{\muT_{A,\tau_1,\tau_2}} & \T\, (\tau_1 + \tau_2)\, B
}
\]
where the unit-like $\etaF_{A,\tau}$ is given by the composite
$A \overset{\varepsilon^{-1}_{A}}{-\!\!\!-\!\!\!\to} \futuremod{0} A \overset{\futuremod{0 \le \tau}_A}{-\!\!\!-\!\!\!-\!\!\!\to} \futuremod{\tau}A$, 
and where the composition morphism $\mathsf{comp}_{A,B,C} : (B \expto C) \times (A \expto B) \to A \expto C$ 
is derived from the universal property of exponentials in the standard way~\cite{Kelly:EnrichedCats}.

Given $\futuremod -$-strength, $\futuremod -$-enrichment is given as
\[
\enr_{A,B,\tau} \defeq \futuremod \tau(A \expto B) \overset{\mathsf{curry}\big(\T\, \tau\, (\mathsf{uncurry}(\id)) \,\circ\, \strT_{A \expto B, A, \tau}\big)}{-\!\!\!-\!\!\!-\!\!\!-\!\!\!-\!\!\!-\!\!\!-\!\!\!-\!\!\!-\!\!\!-\!\!\!-\!\!\!-\!\!\!-\!\!\!-\!\!\!-\!\!\!-\!\!\!-\!\!\!-\!\!\!-\!\!\!-\!\!\!-\!\!\!-\!\!\!-\!\!\!-\!\!\!\to} (\T\,\tau\,A \expto \T\,\tau\,B)
\] 
Conversely, given $\futuremod -$-enrichment, $\futuremod -$-strength is defined as
\[
\strT_{A,B,\tau} \defeq \futuremod \tau A \times \T\, B\, \tau \overset{\mathsf{uncurry}(\enr_{\!B,A \times B,\tau}) \,\circ\, \futuremod{\tau}(\mathsf{curry}(\id)) \times \id}{-\!\!\!-\!\!\!-\!\!\!-\!\!\!-\!\!\!-\!\!\!-\!\!\!-\!\!\!-\!\!\!-\!\!\!-\!\!\!-\!\!\!-\!\!\!-\!\!\!-\!\!\!-\!\!\!-\!\!\!-\!\!\!-\!\!\!-\!\!\!-\!\!\!-\!\!\!-\!\!\!-\!\!\!\to} \T\, (A \times B)\, \tau
\]

\newpage

\subsection{Algebraic Operations for Algebraic Effects}

We require natural transformations
\[
\opT_{A,\tau} : \gsem{A_\op} \times \futuremod{\tau_\op}(\gsem{B_\op} \expto \T\, \tau\, A) \to \T\, (\tau_\op + \tau)\, A
\]
for all $\op : \tysigop{A_\op}{B_\op}{\tau_\op} \in \Ops$, and
\[
\delayT_{\!A,\tau'}\, \tau : \futuremod {\tau} (\T\, \tau'\, A) \to \T\, (\tau + \tau')\, A
\]
for all $\tau \in \mathbb{B}$, satisfying algebraicity laws\footnote{For better readability, 
we write the semantics of ground types $\gsem{A_\op}$ and $\gsem{B_\op}$ simply as as $A_\op$ and $B_\op$ 
in the diagrams here and in the rest of the appendices.} with respect to $\muT$ and $\strT$
\[
\small
\xymatrix@C=4em@R=4.5em@M=0.5em{
A_\op \times \futuremod{\tau_\op}(B_\op \expto \T\, \tau\, (\T\, \tau' \, A)) \ar[d]_{\id \times \futuremod{\tau_\op}(B_\op \expto \muT_{A,\tau,\tau'})} \ar[r]^-{\opT_{\T \tau' A,\tau}} & \T\, (\tau_\op + \tau)\, (\T\, \tau' \, A) \ar[dd]^{\muT_{A,\tau_\op + \tau,\tau'}}
\\
A_\op \times \futuremod{\tau_\op}(B_\op \expto \T\, (\tau + \tau')\, A) \ar[d]_{\opT_{A,\tau + \tau'}}
\\
\T\, (\tau_\op + (\tau + \tau'))\, A \ar[r]_-{\equiv}
& \T\, ((\tau_\op + \tau) + \tau')\, A
}
\]
\[
\small
\xymatrix@C=6em@R=4.5em@M=0.5em{
\futuremod{\tau}(\T\, \tau'\, (\T\, \tau''\, A)) \ar[r]^{\delayT_{(\T\, \tau''\, A),\tau'}\, \tau} \ar[d]_{\futuremod{\tau}(\muT_{A,\tau',\tau''})} & \T\, (\tau + \tau')\, (\T\, \tau''\, A) \ar[dd]^{\muT_{A,\tau + \tau', \tau''}}
\\
\futuremod{\tau}(\T\, (\tau' + \tau'') \, A) \ar[d]_{\delayT_{\!A,\tau' + \tau''}\, \tau}
\\
\T\, (\tau + (\tau' + \tau''))\, A \ar[r]_-{\equiv}
& \T\, ((\tau + \tau') + \tau'')\, A
}
\]
\[
\small
\xymatrix@C=4em@R=4em@M=0.5em{
\futuremod{\tau_\op + \tau} A \times (A_\op \times \futuremod{\tau_\op}(B_\op \expto \T\, \tau\, B)) \ar[r]^-{\id \times \opT_{B,\tau}} \ar[d]_{\alpha \,\circ\, (\mathsf{swap} \times \id) \,\circ\, \alpha^{-1}} & \futuremod{\tau_\op + \tau} A \times \T\, (\tau_\op + \tau)\, B \ar[ddddd]^{\strT_{A,B,\tau_\op + \tau}}
\\
A_\op \times (\futuremod{\tau_\op + \tau}A \times \futuremod{\tau_\op}(B_\op \expto \T\, \tau\, B)) \ar[d]_{\id \times (\delta_{A,\tau_\op,\tau} \times \id)}
\\
A_\op \times \futuremod{\tau_\op}(\futuremod{\tau}A) \times \futuremod{\tau_\op}(B_\op \expto \T\, \tau\, B) \ar[d]_{\id \times \m_{\futuremod{\tau}A, B_\op \expto \T \tau B ,\tau_\op}}
\\
A_\op \times \futuremod{\tau_\op}(\futuremod{\tau}A \times (B_\op \expto \T\, \tau\, B)) \ar[d]_{\id \times \futuremod{\tau_\op}(\mathsf{push\text{-}under\text{-}}\expto)}
\\
A_\op \times \futuremod{\tau_\op}(B_\op \expto (\futuremod{\tau}A \times \T\, \tau\, B)) \ar[d]_{\id \times \futuremod{\tau_\op}(B_\op \expto \strT_{A,B,\tau})}
\\
A_\op \times \futuremod{\tau_\op}(B_\op \expto \T\, \tau\, (A \times B)) \ar[r]_-{\opT_{A \times B,\tau}}
& \T\, (\tau_\op + \tau)\, (A \times B)
}
\]
\[
\small
\xymatrix@C=5em@R=4em@M=0.5em{
\futuremod{\tau + \tau'} A \times \futuremod{\tau}(\T\, \tau'\, B) \ar[r]^-{\id \times \delayT_{\!B,\tau'}\, \tau} \ar[d]_{\delta_{A,\tau,\tau'} \times \id} & \futuremod{\tau + \tau'} A \times \T\, (\tau + \tau')\, B \ar[ddd]^{\strT_{A, B, \tau + \tau'}}
\\
\futuremod{\tau}(\futuremod{\tau'}A) \times \futuremod{\tau}(\T\, \tau'\, B) \ar[d]_{\m_{\futuremod{\tau'}A,\T\, \tau'\, B,\tau}}
\\
\futuremod{\tau}(\futuremod{\tau'}A \times \T\, \tau'\, B) \ar[d]_{\futuremod{\tau}(\strT_{A,B,\tau'})}
\\
\futuremod{\tau}(T\, \tau'\, (A \times B)) \ar[r]_-{\delayT_{\!A \times B,\tau'}\, \tau} & \T\, (\tau + \tau')\, (A \times B)
}
\]
where $\mathsf{push\text{-}under\text{-}}\!\!\expto$ is a composite morphism that pushes the given 
argument $\futuremod{\tau}A$ under the exponential---it is defined using the universal property of $\expto$.

\newpage

\subsection{Graded $\T$-Algebras for Effect Handling}

We require morphisms
\[
\begin{array}{l}
\h_{A,\tau,\tau'} : \Pi_{\op \in \Ops} \Pi_{\tau'' \in \mathbb{N}} \big( 
  (\sem{A_\op} \times \futuremod{\tau_\op}(\sem{B_\op} \expto \T\, \tau''\, A)) \expto \T\, (\tau_\op + \tau'')\, A \big)
\\[0.5ex]
\hspace{7cm}
\to \T\, \tau\, (\T\, \tau'\, A) \expto \T\, (\tau + \tau')\, A
\end{array}
\] 
satisfying laws stating that $\h_A$ returns a graded $\T$-algebra for $\opT$ and $\delayT$
\[
\small
\xymatrix@C=4em@R=4.5em@M=0.5em{
\mathcal{H} \times \T\, \tau\, A \ar[r]^-{\id \times \etaT_{\T\, \tau\, A}} \ar[d]_{\snd} & \mathcal{H} \times \T\, 0\, (\T\, \tau\, A) \ar[d]^{\mathsf{uncurry}(\h_{A,0,\tau})}
\\
\T\, \tau\, A \ar[r]_-{\equiv} & \T\, (0 + \tau)\, A
}
\]
\[
\small
\xymatrix@C=4.8em@R=4.5em@M=0.5em{
\mathcal{H} \times \futuremod{\tau}(\T\, \tau'\, (\T\, \tau''\, A)) \ar[r]^-{\id \times \delayT_{\T \tau'' A,\tau'}\, \tau} \ar[d]_{\etaA_{\mathcal{H},\tau} \times \id} & \mathcal{H} \times \T\, (\tau + \tau')\, (\T\, \tau''\, A) \ar[ddddd]^{\mathsf{uncurry}(\h_{\!A,\tau + \tau'\!,\tau''})}
\\
\futuremod{\tau}(\pastmod{\tau} \mathcal{H}) \times \futuremod{\tau}(\T\, \tau'\, (\T\, \tau''\, A)) \ar[d]_{\m_{\pastmod{\tau} \mathcal{H}, \T\, \tau'\, (\T\, \tau''\, A), \tau}}
\\
\futuremod{\tau}(\pastmod{\tau} \mathcal{H} \times \T\, \tau'\, (\T\, \tau''\, A)) \ar[d]_{\futuremod{\tau}(\epsP_{\mathcal{H},\tau} \times \id)}
\\
\futuremod{\tau}(\mathcal{H} \times \T\, \tau'\, (\T\, \tau''\, A)) \ar[d]_{\futuremod{\tau}(\mathsf{uncurry}(\h_{A,\tau',\tau''}))}
\\
\futuremod{\tau}(\T\, (\tau' + \tau'')\, A) \ar[d]_{\delayT_{\!A,\tau' + \tau''}\, \tau}
\\
\T\, (\tau + (\tau' + \tau''))\, A \ar[r]_{\equiv}
& \T\, ((\tau + \tau') + \tau'')\, A
}
\]
\[
\hspace{-0.2cm}
\scriptsize
\xymatrix@C=1em@R=4em@M=0.5em{
\mathcal{H} \times (A_\op \times \futuremod{\tau_\op}(B_\op \expto \T\, \tau\, (\T\, \tau'\, A))) \ar[r]^-{\id \times \opT_{\T \tau' A,\tau}} \ar[d]_{\langle \fst , \id \rangle} & \mathcal{H} \times \T\, (\tau_\op + \tau)\, (\T\, \tau'\, A) \ar[ddddddddd]^{\mathsf{uncurry}(\h_{A,\tau_\op + \tau,\tau'})}
\\
\mathcal{H} \times (\mathcal{H} \times (A_\op \times \futuremod{\tau_\op}(B_\op \expto \T\, \tau\, (\T\, \tau'\, A)))) \ar[d]_{(\mathsf{proj}_{\tau + \tau'} \,\circ\, \mathsf{proj}_{\op}) \times \id}
\\
\mathcal{H}_{\op,\tau + \tau'} \times (\mathcal{H} \times (A_\op \times \futuremod{\tau_\op}(B_\op \expto \T\, \tau\, (\T\, \tau'\, A)))) \ar[d]_{\id \times (\alpha \,\circ\, (\mathsf{swap} \times \id) \,\circ\, \alpha^{-1})}
\\
\mathcal{H}_{\op,\tau + \tau'} \times (A_\op \times (\mathcal{H} \times \futuremod{\tau_\op}(B_\op \expto \T\, \tau\, (\T\, \tau'\, A)))) \ar[d]_{\id \times (\id \times (\etaA_{\mathcal{H},\tau_\op} \times \id))}
\\
\mathcal{H}_{\op,\tau + \tau'} \times (A_\op \times (\futuremod{\tau_\op}(\pastmod{\tau_\op}\mathcal{H}) \times \futuremod{\tau_\op}(B_\op \expto \T\, \tau\, (\T\, \tau'\, A))))) \ar[d]_{\id \times (\id \times \m_{\pastmod{\tau_\op}\mathcal{H}, B_\op \expto \T \tau (\T \tau' A), \tau_\op})}
\\
\mathcal{H}_{\op,\tau + \tau'} \times (A_\op \times \futuremod{\tau_\op}(\pastmod{\tau_\op}\mathcal{H} \times (B_\op \expto \T\, \tau\, (\T\, \tau'\, A)))) \ar[d]_{\id \times (\id \times \futuremod{\tau_\op}(\epsP_{\mathcal{H},\tau} \times \id))}
\\
\mathcal{H}_{\op,\tau + \tau'} \times (A_\op \times \futuremod{\tau_\op}(\mathcal{H} \times (B_\op \expto \T\, \tau\, (\T\, \tau'\, A)))) \ar[d]_{\id \times (\id \times \futuremod{\tau_\op}(\mathsf{push\text{-}under\text{-}}\expto))}
\\
\mathcal{H}_{\op,\tau + \tau'} \times (A_\op \times \futuremod{\tau_\op}(B_\op \expto (\mathcal{H} \times \T\, \tau\, (\T\, \tau'\, A)))) \ar[d]_{\id \times (\id \times \futuremod{\tau_\op}(B_\op \expto \mathsf{uncurry}(\h_{A,\tau,\tau'})))}
\\
\mathcal{H}_{\op,\tau + \tau'} \times (A_\op \times \futuremod{\tau_\op}(B_\op \expto \T\, (\tau + \tau')\, A)) \ar[d]_{\mathsf{uncurry}\, \id}
\\
\T\, (\tau_\op + (\tau + \tau'))\, A \ar[r]_{\equiv}
& \T\, ((\tau_\op + \tau) + \tau')\, A
}
\vspace{0.5cm}
\]
where we write $\mathcal{H}$ for the domain of the morphisms $\h_{A,\tau,\tau'}$, 
$\mathcal{H}_{\op,\tau + \tau'}$ for the operation case
$(\sem{A_\op} \times \futuremod{\tau_\op}(\sem{B_\op} \expto \T\, (\tau + \tau')\, A)) \expto \T\, (\tau_\op + (\tau + \tau'))\, A$, 
and where the counit-like $\epsP_{A,\tau}$ is given by the composite 
$\pastmod{\tau} A \overset{\pastmod{0 \le \tau}_A}{-\!\!\!-\!\!\!-\!\!\!-\!\!\!\to} 
  \pastmod{0} A \overset{\eta^{-1}_A}{-\!\!\!\to} A$.

\newpage

\subsection{Interpretation of Values and Computations}

Well-typed values are interpreted as follows
\[
\small
\begin{array}{l}
\sem{\Gamma,x \of X,\Gamma' \types x : X} 
~\defeq~ 
\sem{\Gamma,x \of X,\Gamma'}\One
\overset{\iota}{\to}
\sem{\Gamma'}\big(\sem{\Gamma}\One \times \sem{X}\big)
\\[0.5ex]
\hspace{3.75cm}
\overset{\e}{-\!\!\!\!\!\!\longrightarrow}
\pastmod{\ctxtime {\Gamma'}}\big(\sem{\Gamma}\One \times \sem{X}\big)
\overset{\epsP}{-\!\!\!\!\!\!\longrightarrow}
\sem{\Gamma}\One \times \sem{X}
\overset{\snd}{-\!\!\!\!\!\!\longrightarrow}
\sem{X}
\\[3ex]
\sem{\Gamma \types \tmconst{f}(V_1, \ldots , V_n) : B}
~\defeq~
\sem{\Gamma}\One 
\overset{\langle \sem{V_1} , \ldots , \sem{V_n} \rangle}{-\!\!\!-\!\!\!-\!\!\!-\!\!\!-\!\!\!-\!\!\!-\!\!\!-\!\!\!-\!\!\!\longrightarrow} 
\sem{A_1} \times \ldots \times \sem{A_n} 
\overset{\sem{\tmconst{f}}}{-\!\!\!\!\!\!\longrightarrow}
\sem{B}
\\[3ex]
\sem{\Gamma \types \tmpair{V}{W} : \typrod{X}{Y}}
~\defeq~
\sem{\Gamma}\One 
\overset{\langle \sem{V} , \sem{W} \rangle}{-\!\!\!-\!\!\!-\!\!\!-\!\!\!-\!\!\!-\!\!\!\longrightarrow}
\sem{X} \times \sem{Y}
\\[3ex]
\sem{\Gamma \types \tmunit : \tyunit}
~\defeq~
\sem{\Gamma}\One
\overset{!}{-\!\!\!\!\!\!\longrightarrow}
\One
\\[3ex]
\sem{\Gamma \types \tmfun{x : X}{M} : \tyfun{X}{\tycomp{Y}{\tau}}}
~\defeq~
\sem{\Gamma}\One
\overset{\mathsf{curry}(\sem{M})}{-\!\!\!-\!\!\!-\!\!\!-\!\!\!-\!\!\!\longrightarrow}
\sem{X} \expto \T\, \tau\, \sem{Y}
\\[3ex]
\sem{\Gamma \types \tmbox[\tau]{V} : \tybox{\tau}{X}} 
~\defeq~
\sem{\Gamma}\One
\overset{\etaA}{-\!\!\!\!\!\!\longrightarrow}
\futuremod{\tau} \big(\pastmod{\tau}(\sem{\Gamma}\One)\big)
\overset{\futuremod{\tau}(\sem{V})}{-\!\!\!-\!\!\!-\!\!\!\longrightarrow}
\futuremod{\tau} \sem{X}
\end{array}
\vspace{0.5cm}
\]

\noindent
Well-typed computations are interpreted as follows 
\[
\small
\begin{array}{l}
\sem{\Gamma \types \tmreturn{V} : \tycomp{X}{0}}
~\defeq~
\sem{\Gamma}\One
\overset{\sem{V}}{-\!\!\!\!\!\!\longrightarrow}
\sem{X}
\overset{\etaT}{-\!\!\!\!\!\!\longrightarrow}
\T\, 0\, \sem{X}
\\[4ex]
\sem{\Gamma \types \tmlet{x}{M}{N} : \tycomp{Y}{\tau + \tau'}}
~\defeq~
\sem{\Gamma} \One
\overset{\langle \etaA , \sem{M} \rangle}{-\!\!\!-\!\!\!-\!\!\!-\!\!\!-\!\!\!\longrightarrow}
\futuremod{\tau}\big(\pastmod{\tau}(\sem{\Gamma} \One)\big) \times \T\,\tau\,\sem{X}
\\[1ex]
\hspace{2cm}
\overset{\strT}{-\!\!\!\!\!\!\longrightarrow}
\T\, \tau\, \big(\pastmod{\tau}(\sem{\Gamma} \One) \times \sem{X}\big)
\overset{\T\,(\sem{N})}{-\!\!\!-\!\!\!-\!\!\!-\!\!\!\longrightarrow}
\T\, \tau\, (\T\, \tau'\, \sem{Y})
\overset{\muT}{-\!\!\!\!\!\!\longrightarrow}
\T\, (\tau + \tau')\, \sem{Y}
\\[4ex]
\sem{\Gamma \types \tmapp{V}{W} : \tycomp{Y}{\tau}}
~\defeq~
\sem{\Gamma} \One 
\overset{\langle \sem{V} , \sem{W} \rangle}{-\!\!\!-\!\!\!-\!\!\!-\!\!\!-\!\!\!-\!\!\!\longrightarrow}
(\sem{X} \expto \T\, \tau\, \sem{Y}) \times \sem{X}
\overset{\mathsf{uncurry}(\id)}{-\!\!\!-\!\!\!-\!\!\!-\!\!\!\longrightarrow}
\T\, \tau\, \sem{Y}
\\[4ex]
\sem{\Gamma \types \tmmatch{V}{\tmpair{x}{y} \mapsto M} : \tycomp{Z}{\tau}}
~\defeq~
\sem{\Gamma} \One
\overset{\langle \id , \sem{V} \rangle}{-\!\!\!-\!\!\!-\!\!\!\longrightarrow}
\sem{\Gamma} \One \times (\sem{X} \times \sem{Y})
\\[1ex]
\hfill
\overset{\alpha^{-1}}{-\!\!\!\!\!\!\longrightarrow}
(\sem{\Gamma} \One \times \sem{X}) \times \sem{Y}
\overset{\sem{M}}{-\!\!\!\!\!\!\longrightarrow}
\T\, \tau\, \sem{Z}
\\[4ex]
\sem{\Gamma \types \tmop{op}{V}{\tmcont x M} : \tycomp{X}{\tau_\op + \tau}}
~\defeq~
\sem{\Gamma} \One
\overset{\langle \sem{V} , \etaA \rangle}{-\!\!\!-\!\!\!-\!\!\!-\!\!\!-\!\!\!\longrightarrow}
\gsem{A_\op} \times \futuremod{\tau_\op}\big(\pastmod{\tau_\op}(\sem{\Gamma} \One)\big)
\\[1ex]
\hfill
\overset{\!\!\id \times \futuremod{\tau_\op}(\mathsf{curry}(\sem{M}))}{-\!\!\!-\!\!\!-\!\!\!-\!\!\!-\!\!\!-\!\!\!-\!\!\!-\!\!\!-\!\!\!-\!\!\!-\!\!\!-\!\!\!\longrightarrow\,\,}
\gsem{A_\op} \times \futuremod{\tau_\op}\big(\gsem{B_\op} \expto \T\, \tau\, \sem{X}\big)
\overset{\opT}{-\!\!\!\!\!\!\longrightarrow}
\T\, (\tau_\op + \tau)\, \sem{X}
\\[4ex]
\sem{\Gamma \types \tmdelay{\tau}{M} : \tycomp{X}{\tau + \tau'}}
~\defeq~
\sem{\Gamma} \One
\overset{\etaA}{-\!\!\!\!\!\!\longrightarrow}
\futuremod{\tau}\big(\pastmod{\tau}(\sem{\Gamma} \One)\big)
\\[1ex]
\hfill
\overset{\futuremod{\tau}(\sem{M})}{-\!\!\!-\!\!\!-\!\!\!-\!\!\!-\!\!\!\!\!\!\longrightarrow}
\futuremod{\tau}(\T\, \tau'\, \sem{X})
\overset{\delayT\, \tau}{-\!\!\!-\!\!\!-\!\!\!\longrightarrow}
\T\, (\tau + \tau')\, \sem{X}
\end{array}
\]

\[
\begin{array}{l}
\sem{\Gamma \types \tmhandle{M}{H}{x}{N} : \tycomp{Y}{\tau + \tau'}}
~\defeq~
\\[1ex]
\hspace{0.3cm}
\sem{\Gamma} \One
\overset{\langle \id , \langle \etaA , \sem{M} \rangle \rangle}{-\!\!\!-\!\!\!-\!\!\!-\!\!\!-\!\!\!-\!\!\!-\!\!\!-\!\!\!\longrightarrow}
\sem{\Gamma} \One \times \Big(\futuremod{\tau}\big(\pastmod{\tau}(\sem{\Gamma} \One)\big) \times \T\, \tau\, \sem{X}\Big)
\\[1ex]
\hspace{0.7cm}
\overset{\!\!\id \times \strT}{-\!\!\!-\!\!\!\longrightarrow\,\,}
\sem{\Gamma} \One \times \T\, \tau\, \big(\pastmod{\tau}(\sem{\Gamma} \One) \times \sem{X} \big)
\overset{\id \times \T\, \tau\, (\sem{N})}{-\!\!\!-\!\!\!-\!\!\!-\!\!\!-\!\!\!-\!\!\!-\!\!\!\longrightarrow}
\sem{\Gamma} \One \times \T\, \tau\, \big(\T\, \tau'\, \sem{Y}\big)
\\[1ex]
\hspace{3.4cm}
\overset{\sem{H} \times \id}{-\!\!\!-\!\!\!-\!\!\!\longrightarrow}
\mathcal{H} \times \T\, \tau\, \big(\T\, \tau'\, \sem{Y}\big)
\overset{\mathsf{uncurry}(\h_{\sem{Y},\tau,\tau'})}{-\!\!\!-\!\!\!-\!\!\!-\!\!\!-\!\!\!-\!\!\!-\!\!\!-\!\!\!-\!\!\!-\!\!\!\longrightarrow}
\T\, (\tau + \tau')\, \sem{Y}
\\[4ex]
\sem{\Gamma \types \tmunbox[\tau] V x N : \tycomp{Y}{\tau'}}
~\defeq~
\sem{\Gamma}\One 
\overset{\langle \id , \etaPRA \rangle}{-\!\!\!-\!\!\!-\!\!\!-\!\!\!\longrightarrow}
\sem{\Gamma}\One \times \pastmod{\tau}\big(\sem{\Gamma \ctxminus \tau} \One\big)
\\[1ex]
\hfill
\overset{\!\!\id \times \pastmod{\tau}(\sem{V})}{-\!\!\!-\!\!\!-\!\!\!-\!\!\!-\!\!\!-\!\!\!\longrightarrow\,\,}
\sem{\Gamma}\One \times \pastmod{\tau}\big(\futuremod{\tau} \sem{X}\big)
\overset{\id \times \epsA}{-\!\!\!-\!\!\!\longrightarrow}
\sem{\Gamma}\One \times \sem{X}
\overset{\sem{N}}{-\!\!\!\!\!\!\longrightarrow}
\T\,\tau'\, \sem{Y}
\end{array}
\]
where in the $\tmhandle{M}{H}{x}{N}$ case we write $\mathcal{H}$ for 
\[
\Pi_{\op \in \Ops} \Pi_{\tau'' \in \mathbb{N}} \big( 
  (\gsem{A_\op} \times \futuremod{\tau_\op}(\gsem{B_\op} \expto \T\, \tau''\, \sem{Y})) \expto \T\, (\tau_\op + \tau'')\, \sem{Y} \big)
\]
and $\sem{H}$ for
\[
\sem{\Gamma}\One
\overset{\langle \langle \id \rangle_{\tau'' \in \mathbb{N}} \rangle_{\op \in \Ops}}{-\!\!\!-\!\!\!-\!\!\!-\!\!\!-\!\!\!-\!\!\!-\!\!\!-\!\!\!\longrightarrow}
\Pi_{\op \in \Ops} \Pi_{\tau'' \in \mathbb{N}} \Big(\sem{\Gamma}\One\Big)
\overset{\Pi_{\op \in \Ops} \Pi_{\tau'' \in \mathbb{N}} \big( \mathsf{curry}(\sem{M_\op\, \tau''} \,\circ\, \alpha^{-1}) \big)}{-\!\!\!-\!\!\!-\!\!\!-\!\!\!-\!\!\!-\!\!\!-\!\!\!-\!\!\!-\!\!\!-\!\!\!-\!\!\!-\!\!\!-\!\!\!-\!\!\!-\!\!\!-\!\!\!-\!\!\!-\!\!\!-\!\!\!-\!\!\!-\!\!\!-\!\!\!-\!\!\!\longrightarrow}
\mathcal{H}
\]
and where we recall from \cref{sect:core-calculus} that we write $H$ for ${\tmopclause{x}{k}{M_{\op}}}_{\op \in \Ops} $.

\end{document}